\documentclass[11pt,a4paper]{article}

\usepackage[english]{babel}
\usepackage[T1]{fontenc}
\usepackage[utf8]{inputenc}
\usepackage{amsmath,amsfonts}
\usepackage{amsthm}
\usepackage{hyperref}
\usepackage{xspace}

\usepackage[osf,noBBpl]{mathpazo}

\newcommand\class[1]{\ensuremath{\mathsf{#1}}\xspace}
\newcommand\lang[1]{\ensuremath{\textsc{#1}}\xspace}

\newcommand\NP{\class{NP}}
\renewcommand\P{\class{P}}
\renewcommand\L{\class{L}}
\newcommand\PSPACE{\class{PSPACE}}
\newcommand\AM{\class{AM}}
\newcommand\NC{\class{NC}}
\newcommand\PL{\class{PL}}
\newcommand\poly{\class{poly}}

\newcommand\CC{\ensuremath{\mathbb C}\xspace}
\newcommand\ZZ{\ensuremath{\mathbb Z}\xspace}
\newcommand\QQ{\ensuremath{\mathbb Q}\xspace}
\newcommand\FF{\ensuremath{\mathbb F}\xspace}
\newcommand\FFp{\ensuremath{\mathbb F_{\!p}}\xspace}
\newcommand\KK{\ensuremath{\mathbb K}\xspace}

\newcommand\QBF{\lang{QBF}}

\renewcommand\bar\overline
\DeclareMathOperator\Mon{Mon}
\DeclareMathOperator\Mac{Mac}

\newcommand\hn{\lang{HN}}
\newcommand\hhn{\ensuremath{\lang{H}_\lang{2}\lang N}\xspace}
\newcommand\hhnsq{\lang{Resultant}}
\newcommand\Resultant\hhnsq

\title{On the Complexity of the Multivariate Resultant}
\author{Bruno Grenet \and Pascal Koiran \and Natacha Portier\thanks{
This work was partially funded by the European Community (7th PCRD Contract: PIOF-GA-2009-236197).}\\[1em]
    \small LIP, UMR 5668 \'ENS Lyon -- CNRS -- UCBL -- INRIA\\ 
    \small \'Ecole Normale Supérieure de Lyon, Université de Lyon\\
    \small \nolinkurl{[Bruno.Grenet,Pascal.Koiran,Natacha.Portier]@ens-lyon.fr}
}
\date{}

\newtheorem{theorem}{Theorem}
\newtheorem{corollary}{Corollary}
\newtheorem{lemma}{Lemma}
\newtheorem{proposition}{Proposition}
\theoremstyle{definition}
\newtheorem{definition}{Definition}

\begin{document}

\maketitle

\begin{abstract}
The multivariate resultant is a fundamental tool of computational algebraic geometry. It can in particular be used to decide whether a system of $n$ homogeneous equations in $n$ variables is satisfiable (the resultant is a polynomial in the system's coefficients which vanishes if and only if the system is satisfiable). 
In this paper, we investigate the complexity of 
computing the multivariate resultant. 

First, we study the complexity of testing the multivariate resultant for zero. 
Our main result is that this problem 
is $\NP$-hard under deterministic reductions in any characteristic, for systems of low-degree polynomials
with coefficients in the ground field (rather than in an extension).
In null characteristic, we observe that
this problem is in the Arthur-Merlin class $\AM$ if the generalized Riemann hypothesis holds true, while  
the best known upper bound in positive characteristic remains $\PSPACE$. 

Second, we study the classical algorithms to compute the resultant. 
They 
usually rely on the computation of the determinant of an exponential-size matrix, known as Macaulay matrix. We show that this matrix belongs to a class of \emph{succinctly representable} matrices, for which testing the determinant for zero is proved $\PSPACE$-complete. 
This means that improving Canny's $\PSPACE$ upper bound requires either to look at the fine structure of the Macaulay matrix to find an \emph{ad hoc} algorithm for computing its determinant, or to use altogether different techniques.
\end{abstract}

\section{Introduction} 

Given two univariate polynomials, their Sylvester matrix is a matrix built on the coefficients of the polynomials which is singular iff the polynomials
have a common root. The determinant of the Sylvester matrix is known as the resultant of the polynomials. 
This determinant is easy to compute since the size of the Sylvester matrix 
is the sum of the degrees of the polynomials.
The study of the possible generalizations to multivariate systems comes within the scope
of elimination theory 
\cite{vanderWaerden,Mac02,Dix08,Stu91,EM99}. 
This theory proves that the only case where a unique polynomial can testify to the existence
of a common root to the system is the case of $n$ homogeneous polynomials in $n$ variables: 
the resultant of a square system of homogeneous polynomials $f_1,\dots,f_n\in\KK[x_1,\dots,x_n]$ is a polynomial in the indeterminate coefficients of 
$f_1,\dots,f_n$ which vanishes if and only if $f_1,\dots,f_n$ have a nonzero common root
in the algebraic closure of $\KK$. 
The resultant of such a system is known as the \emph{multivariate resultant} in the literature. 
This captures the case of two univariate polynomials \emph{via} 
their homogenization.
Furthermore, in many cases a system of more than $n$ homogeneous polynomials in $n$ variables can be
reduced to a system of $n$ homogeneous polynomials, and so 
the square case is an important one. This result is sometimes known as Bertini's theorem.
(As explained toward the end of this section, we will use an effective version
of this result in one of our $\NP$-hardness proofs.)
In this paper, we focus on the multivariate resultant which we simply refer to as the resultant.

The resultant has been extensively used to solve polynomial systems \cite{Laz81,Ren89,CKY89,CD05} and for the elimination of quantifiers in algebraically
or real-closed fields \cite{Sei54,Ier89}.
More recently, the multivariate resultant has been of interest in pure and applied domains. For instance, the problem of robot motion planning is
closely related to the multivariate resultant \cite{Can88-robot,CR87}, and more generally the multivariate resultant is used in real algebraic 
geometry \cite{Can88-pspace,KSY94}. 
Finally, in the domain of symbolic computation progress has been made for finding explicit formulations for the resultant 
\cite{KS95,DD01,BD04,CD05,JS07}, see also \cite{KK08}. 

The aim of this paper is to contribute to the study of the complexity of the resultant. We both study the complexity of testing the resultant for zero, that is the satisfiability of a system of $n$ homogeneous polynomials in $n$ variables, and the complexity of explicitly (and exactly) computing the resultant.

A shorter version of this paper~\cite{GKP10} has been published in the Proceedings of MFCS 2010. It contains material from Sections~\ref{sec:char0} and \ref{arbitrary}. Material of Section~\ref{sec:macaulay} appeared in preliminary form (and in French) in \cite{Gre09}, but was not published so far.

\subsection{Definitions 
and notations}

In this paper, $\KK$ denotes any field and $\KK[x_1,\dots,x_n]$ the ring of polynomials in $n$ indeterminates over $\KK$. The algebraic closure of $\KK$ is denoted by $\bar\KK$. For a prime number $p$, the finite field with $p$ elements is denoted by $\FFp$. The notation is extended to the characteristic zero by $\FF_0=\mathbb Q$.

A tuple $(a_1,\dots,a_s)$ is denoted by $a$ when there is no possible confusion on the range of the index. In particular, for a tuple of integers $\alpha=(\alpha_1,\dotsc,\alpha_n)$, $x^\alpha$ denotes the monomial $x_1^{\alpha_1}x_2^{\alpha_2}\cdots x_n^{\alpha_n}$, and in the next definition, $\gamma$ is the tuple of the $\gamma_{i,\alpha}$'s.
The \emph{total degree} of the monomial $x^\alpha$ is $|\alpha|=\alpha_1+\dotsb+\alpha_n$. 

\begin{definition}\label{def:resultant}
Let \KK be a field and $f_1,\dots,f_n$ be $n$ homogeneous polynomials in $\KK[x_1,\dots,x_n]$, $f_i(x)=\sum_{|\alpha|=d_i}\gamma_{i,\alpha}x^\alpha$. 
The \emph{multivariate resultant} $R$ of $f_1,\dots,f_n$ is an irreducible polynomial in $\KK[\gamma]$ such that 
\begin{equation} R(\gamma)=0 \iff \exists x\in\bar\KK^n, x\neq\bar 0, f_1(x)=\cdots=f_n(x)=0. \end{equation}
The multivariate resultant is unique up to a constant factor.
\end{definition}

The existence of the multivariate resultant is not evident. We refer to \cite{vanderWaerden,Lang} for a proof of this fact.
The uniqueness comes from the fact that two irreducible polynomials having the same roots are equal up to a constant factor.

The 
first problem we are interested in is testing the resultant for zero. This is the same as deciding whether a square system of homogeneous polynomials 
(that is $n$ polynomials in $n$ variables) has a non-trivial root. This is closely related to 
the decision problem for the existential theory of an algebraically closed field. This problem is sometimes called the 
\emph{Hilbert Nullstellensatz} problem:

\begin{definition}
Let \KK be a field. The \emph{Hilbert Nullstellensatz} problem over \KK, $\hn(\KK)$, is the following:
Given a system $f$ of $s$ polynomials in $\KK[x_0,\dots,x_n]$, does there exist a root of $f$ in $\bar\KK^{n+1}$?

Let us now assume  that the $s$ components of $f$ are homogeneous polynomials. Then the \emph{homogeneous} Hilbert Nullstellensatz problem over \KK, $\hhn(\KK)$, is to
decide whether a \emph{non trivial} (that is, nonzero) root exists in $\bar\KK^{n+1}$.

If $f$ is supposed to contain as many homogeneous polynomials as variables, the problem is called the 
\emph{Resultant} over \KK, $\hhnsq(\KK)$.
\end{definition}

In the case of the field $\mathbb Q$, it is more natural to have coefficients in \ZZ. We shall use the notations \hn, \hhn and \hhnsq for the 
case where
the system is made of integer polynomials. 

In the case of polynomials with coefficients in \ZZ, Canny \cite{Can88-robot} gave in 1987 a $\PSPACE$ algorithm to compute the resultant. 
To the authors' knowledge, this is the best known upper bound. 
In this paper we show that testing the resultant for zero is $\NP$-hard in
any characteristic.
In other words, $\hhnsq(\KK)$ is $\NP$-hard for any field $\KK$.

\subsection*{Main results and proof techniques}

Sections~\ref{sec:char0} and \ref{arbitrary} are devoted to the study of the decision problem, first in null characteristic and then in any characteristic. Section~\ref{sec:macaulay} focuses on the evaluation problem.

In Section~\ref{upper-bound} we observe that for polynomials with integer coefficients, testing the resultant for zero is a problem in the Arthur-Merlin 
($\AM$) class. This result assumes the Generalized Riemann Hypothesis (GRH), and follows
from a simple reduction to the Hilbert Nullstellensatz. For this problem, membership in 
$\AM$ assuming GRH was established in~\cite{Koi96}.

In characteristic zero, it seems to be a ``folklore'' result that testing the resultant for zero is $\NP$-hard, and one can find a proof of this fact in~\cite{HeiMo93}. A similar result can be obtained by considering a system of two homogeneous polynomials, but given in lacunary representation (their degree can therefore be exponential in the input size). This result of incomparable strength is a reformulation of a theorem of Plaisted~\cite{Pla84}. We give the proofs of these two results to be able to argue about their irrelevance in positive characteristic.

The first proof does not carry over to positive characteristic since it is a reduction
from the problem \lang{Partition}~\cite[Problem SP12]{GareyJohnson} whose
$\NP$-hardness relies in an essential way on the fact
that the data are integers (in fact, in any finite field the analogue
problem can be solved in polynomial time by dynamic programming).

Plaisted's result can be adapted to positive characteristic~\cite{GKS96,KK05}
but this requires randomization. By contrast, our ultimate goal is $\NP$-hardness
for deterministic reductions and low degree polynomials.
We therefore need to use different techniques.
Our starting point is a fairly standard encoding of $\lang{3-SAT}$
by systems of polynomial equations. Using this encoding we show at the beginning of Section~\ref{arbitrary} that deciding the existence of a nontrivial
solution to a system of homogeneous equations is $\NP$-hard in any characteristic.
The resulting system has in general more equations than variables.
In order to obtain a square system we explore these two basic strategies:
\begin{itemize}
\item[(i)] Decrease the number of equations.
\item[(ii)] Increase the number of variables.
\end{itemize}
In Section~\ref{sec:rand} we give a randomized 
$\NP$-hardness result based on the first
strategy. The idea is to replace the initial system by
a random linear combinations of the system's equations 
(the fact this does not change
the solution set is sometimes called a ``Bertini 
theorem'').

In Section~\ref{sec:deter} we use the second strategy to 
obtain two $\NP$-hardness results for deterministic reductions.
The main difficulty is to make sure that the introduction of new variables
does not create spurious solutions 
(we do not want to turn an unsatisfiable system into a satisfiable system).
Our solution to this problem can be viewed as a derandomization result.
Indeed, it can be shown that the coefficients of the monomials where 
the new variables occur could be chosen at random.
It would be interesting to find out whether the proof based on the first
strategy can also be derandomized.

Section~\ref{sec:macaulay} is devoted to the study of the Macaulay matrices which are used in several formulations of the resultant. These matrices have an exponential-size (in the number of variables and the degrees of the polynomials) but can be efficiently represented by a polynomial-size boolean 
circuit. This notion of \emph{circuit representation} goes back to \cite{GW83} and was developed for graphs. We prove that computing the determinant (or testing it for zero) of a matrix which is given by a circuit is $\PSPACE$-complete in any characteristic. This shows that to improve the best known upper bound on the computation of the resultant in positive characteristic, one needs either to use the fine structure of the Macaulay matrices, or to find an altogether different method. So far, this has been achieved only in characteristic $0$~\cite{Koi96}.

\section{Complexity of the resultant in null characteristic}
\label{sec:char0}

\subsection{The resultant lies in $\AM$} 
\label{upper-bound}

In this section we show that  testing the resultant for zero is reducible to $\hn(\KK)$. In the case 
$\KK = \ZZ$, this allows us
to conclude (under the Generalized Riemann Hypothesis) 
that our problem is in the polynomial hierarchy, and more precisely in 
the Arthur-Merlin class.
In fact, we show that this applies more generally to the satisfiability problem
for homogeneous systems (recall that testing the resultant for zero 
corresponds to the square case).

\begin{proposition}\label{prop:in-am}
For any field \KK, the problem $\hhn(\KK)$ is polynomial-time many-one reducible to $\hn(\KK)$.
\end{proposition}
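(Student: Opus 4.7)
My plan is to reduce the homogeneous problem to the (affine) Nullstellensatz by a standard dehomogenization trick: adjoin one new variable per original variable together with a single affine linear equation that forces the original $x$-point to be nonzero, while keeping the $f_j$'s themselves unchanged.

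Concretely, given an instance $f_1,\dots,f_s \in \KK[x_0,\dots,x_n]$ of $\hhn(\KK)$, I will output the $\hn(\KK)$ instance consisting of $f_1,\dots,f_s$ together with the additional polynomial
\[ g(x,y) = y_0 x_0 + y_1 x_1 + \dots + y_n x_n - 1, \]
viewed as an element of $\KK[x_0,\dots,x_n,y_0,\dots,y_n]$. This can be produced in polynomial time: one adds $n+1$ fresh variables and a single equation whose coefficients all lie in $\{-1,0,1\}$.

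For correctness I would argue the two implications separately. Suppose first that $x^\ast \in \bar\KK^{n+1}\setminus\{0\}$ is a nontrivial common zero of the $f_j$. Choose any index $i$ with $x_i^\ast \neq 0$ and set $y_i^\ast = 1/x_i^\ast$ and $y_k^\ast = 0$ for $k\neq i$; then $g(x^\ast,y^\ast)=0$ while the original equations are satisfied by hypothesis, so the new system has a root in $\bar\KK^{2n+2}$. Conversely, if $(x^\ast,y^\ast)\in\bar\KK^{2n+2}$ is a root of the enlarged system, then the relation $\sum_i y_i^\ast x_i^\ast = 1$ rules out $x^\ast = 0$, and hence $x^\ast$ is a nontrivial common zero of $f_1,\dots,f_s$.

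The only point requiring attention is that $\hn(\KK)$, as defined in the excerpt, accepts arbitrary (not necessarily homogeneous) systems, so the affine equation $g$ is a legal component of the target instance. Beyond this sanity check there is no real obstacle; the substance of the argument is the classical observation that a projective point is nonzero iff some linear form (here the generic one $\sum y_i x_i$, with the $y_i$ treated as free unknowns) can be made to evaluate to $1$ at it.
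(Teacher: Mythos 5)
Your reduction is exactly the one the paper uses: adjoin fresh variables $y_i$ and the single affine equation $\sum_i x_i y_i - 1 = 0$, then argue both directions by picking a nonzero coordinate and inverting it, respectively by observing that the extra equation forbids $x=0$. The argument is correct and matches the paper's proof; the only difference is cosmetic (the paper indexes the $x$-variables from $1$ to $n$ in its proof while you index from $0$ to $n$, consistent with the problem definition).
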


\begin{proof}
Consider an instance $f$ of $\hhn(\KK)$, that is $s$ homogeneous polynomials 
$f_1,\dots,f_s\in\KK[x_1,\dots,x_n]$. The polynomials $f_1,\dots,f_s$ can be viewed as elements of $\KK[x_1,\dots,x_n,y_1,\dots,y_n]$ 
where $y_1,\dots,y_n$ are new variables which do not appear in the $f_i$'s. Let $g$ be the system containing all the
$f_i$'s and the new (non-homogeneous) polynomial $\sum_{i=1}^n x_iy_i-1$. This is an instance of the problem $\hn(\KK)$. It remains to prove 
that $f$ and $g$ are equivalent.

Given a root $(a_1,\dots,a_n,b_1,\dots,b_n)$ of $g$, the new polynomial ensures that there is at least one nonzero $a_i$. 
So $(a_1,\dots,a_n)$ is a non trivial root of 
$f$. Conversely, suppose that $f$ has a non trivial root $(a_1,\dots,a_n)$, and let $i$ be such that $a_i\ne 0$. Then the tuple
$(a_1,\dots,a_n,0,\dots,0,a_i^{-1},0,\dots,0)$ where $a_i^{-1}$ corresponds to the variable $y_i$ is a root of $g$.

Thus $\hhn(\KK)$ is polynomial-time many-one reducible to $\hn(\KK)$.
\end{proof}

Koiran \cite{Koi96} proved that $\hn\in\AM$ under the Generalized Riemann Hypothesis.
We denote here by $\AM$ the Arthur-Merlin class, defined by \emph{interactive proofs with public coins} (see \cite{AroraBarak}).
Thereby,

\begin{corollary}\label{corr:in-am}
Under the Generalized Riemann Hypothesis, \hhn is in the class $\AM$.
\end{corollary}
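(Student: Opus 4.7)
The plan is to obtain the corollary as an immediate composition of two earlier results: the reduction established in Proposition~\ref{prop:in-am} and the $\AM$ upper bound for $\hn$ proved by Koiran~\cite{Koi96} under GRH. No new ingredient is needed; the argument is essentially a one-line chaining, and the body of the proof is really just a sanity check that the reduction is compatible with the hypotheses of the cited theorem.

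Concretely, I would start from an instance $f = (f_1, \dots, f_s)$ of $\hhn$, i.e., $s$ homogeneous polynomials in $\ZZ[x_1, \dots, x_n]$, and apply the reduction of Proposition~\ref{prop:in-am} verbatim to produce the system $g$ consisting of the $f_i$ viewed in $\ZZ[x_1, \dots, x_n, y_1, \dots, y_n]$ together with the affine polynomial $\sum_{i=1}^n x_i y_i - 1$. I would then check that $g$ is a legitimate instance of $\hn$ (with the integer-coefficient convention adopted after the definition of $\hhn$): the $f_i$ already have integer coefficients by assumption on the input, and the coefficients of the adjoined polynomial all lie in $\{-1, 0, 1\} \subset \ZZ$. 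The bit-size of $g$ is clearly polynomial in that of $f$, and the reduction itself is computable in polynomial time, so this is a polynomial-time many-one reduction from \hhn{} to \hn.

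Next, I would invoke Koiran's theorem, which asserts $\hn \in \AM$ under GRH. Since $\AM$ is closed under polynomial-time many-one reductions (a standard property of the class, e.g.\ from the public-coin interactive-proof definition used in \cite{AroraBarak}), composing with the reduction above yields $\hhn \in \AM$ under GRH, which is the desired statement.

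I do not anticipate any real obstacle: all the substance has been done, either upstream in Proposition~\ref{prop:in-am} or externally in~\cite{Koi96}. The only point requiring a line of attention is the verification that the reduction preserves the integer-coefficient regime needed for the cited $\hn$ result; once that is observed, the corollary drops out.
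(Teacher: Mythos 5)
Your proposal is correct and follows exactly the paper's route: compose the polynomial-time many-one reduction of Proposition~\ref{prop:in-am} with Koiran's GRH-conditional $\AM$ upper bound for \hn, using closure of $\AM$ under such reductions. The paper treats this as an immediate consequence (``Thereby''), and your extra check that the reduction stays in the integer-coefficient regime is a sound but minor verification, not a different argument.
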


In positive characteristic, the best upper bound on the complexity of the Hilbert Nullstellensatz known to this day remains  $\PSPACE$ (in particular
it is not known whether the problem lies in the polynomial hierarchy,
even assuming some plausible number-theoretic conjecture such as the generalized Riemann hypothesis).

\subsection{The resultant is $\NP$-hard in characteristic $\mathbf 0$} 
\label{char0}

We now give a first $\NP$-hardness result, for the satisfiability of square
systems  of homogeneous polynomial equations. 
The first part of the theorem can be easily deduced from~\cite[Proposition~10]{HeiMo93}.
The second part shows that the problems remains $\NP$-hard even
for systems with small integer coefficients (i.e., coefficients bounded by $2$).
This is achieved by a standard trick: we introduce new variables in order to ``simulate'' large integers coefficients. 
It is interesting to note, however, that
a similar trick for reducing degrees does not seem to apply 
to the resultant problem (more on this after Theorem~\ref{plaisted}).
\begin{theorem}\label{thm:folklore}
The problem \hhnsq of deciding whether a square system of homogeneous polynomials with coefficients in \ZZ has a non trivial root is $\NP$-hard.

The problem remains $\NP$-hard even if no polynomial has degree greater that $2$ and even if the coefficients are bounded by $2$.
\end{theorem}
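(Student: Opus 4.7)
The plan is to reduce from the \NP-hard problem \textsc{Partition}: given positive integers $a_1, \dots, a_n$, decide whether there exist signs $\epsilon_i \in \{-1, +1\}$ with $\sum_i \epsilon_i a_i = 0$. I would introduce $n+1$ variables $x_0, x_1, \dots, x_n$ and the $n+1$ homogeneous equations
\[ x_i^2 - x_0^2 = 0 \quad (i = 1, \dots, n), \qquad \sum_{i=1}^n a_i x_i = 0. \]
This is a square homogeneous system. A nontrivial root must have $x_0 \neq 0$ (otherwise the $n$ quadratic equations force every $x_i = 0$), and after rescaling so that $x_0 = 1$ they impose $x_i = \pm 1$, so the linear equation encodes exactly a \textsc{Partition} solution. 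The converse is immediate, which proves the first assertion.

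For the second assertion, the coefficients $a_i$ may be exponentially large in the input size, so I would reduce them to $\{0,1\}$ by a standard doubling trick. Writing $a_i = \sum_{j=0}^{k-1} b_{i,j}\, 2^j$ in binary with $k = O(\log \max_i a_i)$, introduce fresh variables $u_{i,j}$ meant to carry the value $2^j x_i$, together with the linear homogeneous equations
\[ u_{i,0} - x_i = 0, \qquad u_{i,j} - 2\, u_{i,j-1} = 0 \quad (1 \le j \le k-1), \]
and replace the original sum equation by $\sum_{i,j} b_{i,j}\, u_{i,j} = 0$. This adds exactly $nk$ new variables and $nk$ new equations, so the resulting system remains square and homogeneous; every coefficient lies in $\{-2,-1,0,1,2\}$ and no polynomial has degree larger than $2$. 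The chain equations force $u_{i,j} = 2^j x_i$ in any solution (in the degenerate case $x_0 = 0$, they simply propagate $u_{i,j} = 0$ down the chain from $u_{i,0} = x_i = 0$), so no spurious nontrivial roots are created.

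I do not anticipate a serious obstacle. The points that deserve care are purely bookkeeping: matching the number of doubling equations to the number of auxiliary variables in order to preserve squareness, and verifying that the degree and coefficient bounds are respected in every added polynomial. It is worth remarking that this coefficient-reduction trick has no obvious analogue that would simultaneously reduce degrees --- replacing a high-degree monomial by a chain of quadratic substitutions $v_{\ell} = v_{\ell-1}\cdot x$ introduces new variables with matching equations, but the interplay with the square-system requirement for \hhnsq is delicate, which is precisely why \textsc{Partition}-based hardness is naturally well suited to the low-degree regime.
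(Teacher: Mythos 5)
Your proposal is correct and takes essentially the same approach as the paper: a reduction from \textsc{Partition} via the square system $x_0^2 - x_i^2 = 0$ together with one weighted linear equation, followed by the standard binary-decomposition and doubling-chain trick to bound the coefficients by $2$ while keeping the system square and homogeneous. The only difference is cosmetic — the paper's auxiliary variables $W_{i,j}$ encode the partial sums $x_0\sum_{l\geq j} w_{i,l} 2^{l-j}$ via a descending recurrence and the replacement for $f_0$ becomes the degree-$2$ polynomial $\sum_i W_{i,0}x_i$, whereas your $u_{i,j}$ encode $2^j x_i$ via an ascending chain and the replacement stays linear; both variants satisfy the stated degree and coefficient bounds.
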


\begin{proof} 
The reduction is done from \lang{Partition} which is known to be $\NP$-hard \cite[Problem SP12]{GareyJohnson}: Given a finite set $S$ and a non negative 
integer
weight $w(s)$ for each $s\in S$, the problem is to decide the existence of subset $S'$ such that $\sum_{s\in S'} w(s)=\sum_{s\notin S'} w(s)$. That is, the aim 
is to cut $A$ into two subsets of same weights.

Given such an instance of \lang{Partition} where $S=\{s_1,\dots,s_n\}$, let us define a system of polynomials. For $1\le i\le n$, $f_i(x)=x_0^2-x_i^2$. 
And \begin{equation} f_0(x_0,\dots,x_n)=w(s_1)x_1+w(s_2)x_2+\cdots+w(s_n)x_n. \end{equation}
A tuple $(a_0,\dots,a_n)$ has to verify $a_i=\pm a_0$ for each $i$ to be a solution, hence the only case to consider is $a_0=1$ and $a_i=\pm1$ for $i\ge 1$.
Then it is clear that the system has a non trivial solution if and only if $S$ may be split into two subsets of equal weights.

For the second part of the theorem, it remains to show that the coefficients in the system can be bounded by $2$. As the $w(s_i)$'s may be large integers, 
they have to be replaced by variables. Let us write $w(s_i)=\sum_{j=0}^p w_{ij}2^j$. 
For each $w_{ij}$, a new variable $W_{ij}$ is introduced. 
For every $i$, the values
of the $W_{ij}$'s are defined by a descending recurrence:
\begin{equation} \left\{\begin{array}{l@{\ }l@{\ }lll}
W_{ip}&-&w_{ip}x_0&=&0\\
W_{ij}&-&(2W_{i,j+1}+w_{ij}x_0)&=&0\quad\text{for all $j<p$}
\end{array}\right. \end{equation}
These equalities imply that for every $i$ and $j$ we have $W_{ij}=\sum_{l=j}^p w_{ij} 2^{j-l} x_0$.
Then $f_0$ is replaced by $W_{1,0}x_1+W_{2,0}x_2+\cdots+W_{n,0}x_n$. 
Doing so, the number of polynomials remains the same as the number of variables. 
Hence, this algorithm build a new homogeneous system where the polynomials have their coefficients bounded 
by $2$ and their degrees too. One can readily check that the new system has a non trivial solution if and only if the original one has. In particular, 
if $x_0$ is set to zero, then all other variables have to be set to zero too.
\end{proof} 

A related result is Plaisted's \cite{Pla84} on the $\NP$-hardness of deciding whether the gcd of two sparse univariate polynomials has degree 
greater than one. By homogenization of the polynomials, this is the same problem as in Theorem \ref{thm:folklore} for only two bivariate polynomials. 
Note that the polynomials are sparse and can be of very high degree since
exponents are written in binary (this polynomial representation is sometimes called ``supersparse'' or ``lacunary''~\cite{KK05}).
If both polynomials were dense, 
the resultant could be computed in polynomial time
since it is equal to the determinant of their Sylvester matrix.
Plaisted's theorem stated in the same language as Theorem
\ref{thm:folklore} is the following:

\begin{theorem} \label{plaisted}
Given two sparse homogeneous polynomials in $\ZZ[x,y]$, it is $\NP$-hard to decide whether they share a common root in $\CC^2$.
\end{theorem}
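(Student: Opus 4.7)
The plan is to deduce Theorem~\ref{plaisted} from Plaisted's original NP-hardness result on sparse univariate polynomials, via a mild dehomogenization step. Two nonzero homogeneous polynomials $F,G\in\ZZ[x,y]$ share a nonzero common complex root iff the dehomogenizations $f(x)=F(x,1)$ and $g(x)=G(x,1)$ share a root in $\CC$, or the point $(1,0)$ is a common zero (readable off whether $y$ divides both $F$ and $G$, equivalently whether the monomials $x^{\deg F}$ and $x^{\deg G}$ are absent). Sparsity is preserved in both directions, so it suffices to prove NP-hardness of the following univariate problem: given two sparse polynomials $f,g\in\ZZ[x]$ with binary-encoded exponents, decide whether $\gcd(f,g)$ has positive degree, equivalently whether $f$ and $g$ share a complex root.

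I would prove this univariate statement by reducing from \lang{Subset Sum}. Given weights $a_1,\dots,a_n$ and a target $T$, the goal is to construct sparse $f,g\in\ZZ[x]$ that share a complex root iff some $S\subseteq[n]$ satisfies $\sum_{i\in S} a_i = T$. The key algebraic tool is the cyclotomic identity $\gcd(x^a-1,x^b-1)=x^{\gcd(a,b)}-1$: the common roots of $x^a-1$ and $x^b-1$ are precisely the $\gcd(a,b)$-th roots of unity. Choosing pairwise coprime integers (say, distinct polynomially-sized primes $p_1,\dots,p_n$) and setting $N=\prod_i p_i$, the Chinese remainder theorem gives a bijection between divisors $d\mid N$ (equivalently subsets $S\subseteq[n]$) and the cyclotomic factors $\Phi_d$ of $x^N-1$. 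Each $N$-th root of unity $\omega$ therefore encodes a subset $S(\omega)$, and one arranges $f$ and $g$ so that the surviving common roots correspond exactly to subsets of total weight $T$.

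The main obstacle is sparsity. The most direct encoding of a subset-sum condition into one polynomial, namely $\prod_{i=1}^n(1+x^{a_i})-2^n x^T$, does compute the right thing but expands to $\Theta(2^n)$ monomials and is therefore useless here. Plaisted's insight is that because exponents are written in binary, polynomials such as $x^N-1$ with astronomically large $N$ remain $2$-sparse, so the cyclotomic divisibility identity lets one realize the combinatorial ``AND'' across the $n$ coordinates \emph{implicitly} via gcds of roots of unity, rather than explicitly via products of binomials. Assembling the exponents $a_i,T,p_i$ into just two sparse polynomials whose common-root set correctly implements the subset-sum predicate, and verifying both completeness (a valid subset yields a common root) and soundness (any common root encodes a valid subset), is the technical core of the argument and the only delicate step; dehomogenizing and lifting back to $\ZZ[x,y]$ is then immediate.
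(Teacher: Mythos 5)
Your dehomogenization step is fine, and you correctly identify that the real work is in the univariate lacunary-gcd problem and that the algebraic substrate is cyclotomic: the common roots of $x^a-1$ and $x^b-1$, the CRT correspondence between divisors of $N=\prod_i p_i$ and subsets of $[n]$, and Plaisted's encoding of the membership predicates $i\in S(\omega)$ / $i\notin S(\omega)$ by $1+x^{N/p_i}+\cdots+x^{(p_i-1)N/p_i}$ and $x^{N/p_i}-1$. But the proposal then stops exactly where the theorem actually lives. You write that ``assembling the exponents $a_i,T,p_i$ into just two sparse polynomials whose common-root set correctly implements the subset-sum predicate\ldots is the technical core of the argument and the only delicate step,'' and you do not supply that step. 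That is a genuine gap, not a deferred routine verification: the predicate $\sum_{i\in S(\omega)}a_i=T$ is an \emph{arithmetic} constraint on the weight of the encoded subset, and the cyclotomic machinery only gives you sparse Boolean indicators of the individual bits $[i\in S(\omega)]$. There is no evident way to aggregate those indicators into a single weight-equality test using $O(1)$ sparse polynomials and a gcd condition, and the natural encodings ($\prod_i(1+x^{a_i})$ etc.) blow up exponentially, as you yourself note.

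The paper's proof (Plaisted's Theorem~5.1) avoids this by reducing from $\lang{3-SAT}$ rather than $\lang{Subset Sum}$, and the piece you are missing is precisely the conjunction gadget
\begin{equation*}
P_{\phi}(x)=x^{M}\sum_i P_{\phi_i}(x)\,P_{\phi_i}(1/x),
\end{equation*}
paired with the second polynomial $x^M-1$. Disjunctions are handled by lcm (which stays sparse because one is taking lcms of products of cyclotomic binomials/ cyclotomic-like factors), and the conjunction works because for $\omega$ on the unit circle $P_{\phi_i}(\omega)P_{\phi_i}(1/\omega)=|P_{\phi_i}(\omega)|^2\ge 0$, so the sum vanishes at an $M$-th root of unity iff every clause polynomial does. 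This is a Boolean AND, not an arithmetic sum test, which is why it composes cleanly with the per-variable cyclotomic indicators while your subset-sum predicate does not. It is also the step responsible for the theorem being tied to characteristic $0$ (the ``sum of nonnegative terms vanishes iff each term does'' argument fails over $\FFp$), a point the surrounding text of the paper emphasizes. As written, your proposal is an outline of a plausible-looking but unrealized alternative; to make it a proof you would either need to exhibit the sparse $f,g$ for $\lang{Subset Sum}$ with a correctness argument, or switch to the $\lang{3-SAT}$ reduction and supply the conjunction trick.
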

We briefly sketch Plaisted's reduction since it will help understand
the discussion at the end of this section. For a full proof (including 
a correctness proof), see \cite[Theorem 5.1]{Pla84}.
\begin{proof}[Proof sketch] 
The idea is to turn a \lang{3-SAT} instance into a system of two univariate polynomials which share a common root if and only if the 3-CNF formula 
is satisfiable.

To every variable $X_j$ is associated a prime $p_j$, and let $M=\prod_j p_j$ 
where the product ranges over all the variables that appear in the
formula. A formula $\phi$ is turned into a polynomial $P_\phi$ according to the following rules. A non negated variable $X_j$ is turned into 
$P_{X_j}(x)=x^{M/p_j}-1$ and a negated variable $\neg X_k$ into $P_{\neg X_k}(x)=1+x^{M/p_k}+\cdots+x^{(p_k-1)M/p_k}$.  
Then a formula $\phi\vee\psi$ is turned into $P_{\phi\vee\psi}=\text{lcm}(P_\phi,P_\psi)$.
A conjunction  $\phi= \bigwedge_i \phi_i$ is turned into the polynomial 
\begin{equation}\label{eq:plaisted}
P_{\phi}(x)=x^M \sum_i P_{\phi_i}(x)P_{\phi_i}(1/x)
\end{equation}
This defines the first polynomial $P$.
The second polynomial is simply $x^M-1$. The proof that those two polynomials share a common root if and only if $\phi$ is satisfiable is omitted.

To obtain the result in the way we stated it, it is sufficient to homogenize $P(x)$ and $x^M-1$ with the second variable $y$.
\end{proof} 

Theorems~\ref{thm:folklore} and~\ref{plaisted} seem to be incomparable.
In particular, it is not clear how to derive Theorem~\ref{thm:folklore}
from Theorem~\ref{plaisted}. A natural idea would be to introduce new variables
and use the repeated squaring trick to reduce the degrees of the polynomials
occurring in Plaisted's result. However, as we now explain this can lead
to the creation of unwanted roots at infinity in the resulting 
polynomial system.

Assume for instance that we wish to get rid of all occurrences of $x^2$ in a polynomial. One can add a new variable
$x_2$, replace the occurrences of $x^2$ by $x_2$ and add a new polynomial $x_2-x^2$. In order to keep the system homogeneous, the idea is to homogenize
the latter polynomial: $x_0x_2-x^2$. The problem with this technique is that it adds some new roots with all variables but $x_2$ set to $0$, and in 
particular the homogenization variable $x_0$.

To give an explicit example of the problem mentioned above, let us consider the formula
\begin{equation} (X\vee Y)\wedge(\neg X)\wedge(\neg Y). \end{equation}
Let us associate the prime number $2$ to the variable $X$, and $3$ to $Y$ ($M$ in the previous proof is therefore $6$). 
By Plaisted's construction, $X$ is turned into $x^{M/2}-1=x^3-1$ and $Y$ into $x^2-1$. Their negations $\neg X$ and $\neg Y$ are respectively turned
into $1+x^3$ and $1+x^2+x^4$. The disjunction of $X$ and $Y$ is turned into the lcm of $x^3-1$ and $x^2-1$, that is $(x^2-1)(x^2+x+1)$. Finally, we have
to apply formula \eqref{eq:plaisted} with the latter polynomial, $1+x^3$ and $1+x^2+x^4$. Therefore, the two polynomials of Plaisted's construction
are $x^M-1=x^6-1$ and $-x^3+x^4+2x^5+9x^6+2x^7+x^8-x^9$. 
It can be checked that
as expected,
 those two polynomials do not share any common root.

Applying the repeated squaring trick with homogenization on this example
gives the following system where the two first polynomials represent the original ones and the other ones
are new ones:
\begin{equation} \left\{ \begin{array}{rrr}
\multicolumn{3}{r}{-x_3+x_4+2x_5+9x_6+2x_7+x_8-x_9=0}\\
x_6-x_0=0;& x_0x_2-x^2=0; & x_0x_3-x_2x=0\\
x_0x_4-x_2^2=0;& x_0x_5-x_4x=0; & x_0x_6-x_2x_4=0\\
x_0x_7-x_4x_3=0; & x_0x_8-x_4^2=0; & x_0x_9-x_8x=0
\end{array}\right. \end{equation}
But in that example, one can easily check that solutions with $x_0=0$ exist. Namely if we set $x_8$ and $x_9$ to the same nonzero value and all other 
variables to $0$, this defines a solution to the system. 

To the authors' knowledge, there is no solution to avoid these unwanted roots. Furthermore, Plaisted's result works well with fields of characteristic
$0$, but as it uses the fact that a sum of non negative terms is zero if and only if every term is zero, this generalizes not so well to positive characteristic.
In particular, generalizations to positive characteristic require randomization (see \cite{KK05} and \cite{GKS96}).
By contrast, two of the reductions given in the next section are deterministic
and they yield systems with polynomials of low degree (i.e., of linear or even constant degree).

\section{$\NP$-hardness in arbitrary characteristic} 
\label{arbitrary}

In this section we give three increasingly stronger $\NP$-hardness results
for testing the resultant.
As explained in the introduction, we first provide in Section~\ref{sec:rand} a $\NP$-hardness proof for  randomized reductions. We then give in Section~\ref{sec:deter} two $\NP$-hardness 
results for deterministic reductions: the first
one applies to systems with coefficients in an extension of the ground field,
and the second (stronger) result to systems with coefficients in the ground field only.
The starting point for these three $\NP$-hardness results is the following easy lemma.
\begin{lemma}\label{lemma:h2n}
Given a field \KK of any characteristic, it is $\NP$-hard to decide whether a system of $s$ homogeneous polynomials in $\KK[x_0,\dots,x_n]$ has a 
non trivial root. That is, $\hhn(\KK)$ is $\NP$-hard.
\end{lemma}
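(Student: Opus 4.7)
The plan is to reduce from \lang{3-SAT} via a direct and fairly standard polynomial encoding that works over any field \KK. Given a \lang{3-CNF} formula $\phi$ with Boolean variables $X_1,\dots,X_n$ and clauses $C_1,\dots,C_m$, I introduce one algebraic variable $x_i$ for each $X_i$, plus a distinguished homogenization variable $x_0$. The intended semantics is $x_0=1$, with $x_i=1$ encoding ``true'' and $x_i=0$ encoding ``false.''

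To force the Boolean behaviour in a homogeneous way, for each $i\in\{1,\dots,n\}$ I add the degree-$2$ polynomial
\begin{equation}
 x_i^2 - x_0 x_i = x_i(x_i-x_0),
\end{equation}
which is homogeneous and, at any solution with $x_0\neq 0$, pins $x_i$ to $\{0,x_0\}$. For a literal $\ell$ I define a homogeneous degree-$1$ ``falsity indicator'' $F(\ell)$: set $F(X_j)=x_0-x_j$ and $F(\neg X_j)=x_j$, so that $F(\ell)=0$ iff $\ell$ is true under the intended encoding. For each clause $C=\ell_1\vee\ell_2\vee\ell_3$, I add the degree-$3$ homogeneous polynomial
\begin{equation}
 P_C = F(\ell_1)\,F(\ell_2)\,F(\ell_3),
\end{equation}
which vanishes iff at least one literal of $C$ is true. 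The full system has $n+m$ homogeneous polynomials in the $n+1$ variables $x_0,\dots,x_n$ and is clearly produced in polynomial time from $\phi$.

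The correctness argument has two directions. For the forward direction, assume $\phi$ is satisfiable by an assignment $\sigma$; set $x_0=1$ and $x_i=1$ or $0$ according to $\sigma(X_i)$. Each quadratic constraint is satisfied, each $P_C$ vanishes because $\sigma$ satisfies $C$, and the point is nontrivial since $x_0=1\neq 0$. For the converse, suppose $(a_0,a_1,\dots,a_n)\in\bar\KK^{n+1}$ is a nontrivial solution. I first argue $a_0\neq 0$: indeed, if $a_0=0$ then each constraint $x_i(x_i-x_0)$ becomes $a_i^2=0$, which forces $a_i=0$ in a field, contradicting nontriviality. Rescaling by $a_0^{-1}$ (the system is homogeneous so this is legitimate), I may assume $a_0=1$; then $a_i(a_i-1)=0$ gives $a_i\in\{0,1\}$, which defines a Boolean assignment $\sigma$, and the vanishing of each $P_C$ forces at least one literal of $C$ to be true under $\sigma$. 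Thus $\sigma$ satisfies $\phi$.

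I do not expect a serious obstacle here: the only subtlety is the argument that $a_0\neq 0$ for any nontrivial root, which is exactly what prevents the ``unwanted roots at infinity'' phenomenon discussed after Theorem~\ref{plaisted}, and it holds automatically because the quadratic Boolean constraints degenerate to $a_i^2=0$ when $a_0=0$. Everything else is formal. Note that the construction is characteristic-free: it never uses any property of $\mathbb Z$ or of characteristic $0$, so the reduction establishes $\NP$-hardness of $\hhn(\KK)$ uniformly for every field \KK. The resulting system is, however, not square (it contains $n+m$ polynomials in $n+1$ variables), which is precisely why the subsequent Sections~\ref{sec:rand} and~\ref{sec:deter} develop strategies to cut the number of equations or to add variables so as to obtain square systems.
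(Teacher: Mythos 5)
Your proof is correct, but it takes a genuinely different route from the paper's. You reduce directly from $\lang{3-SAT}$, encode each clause $C$ by the degree-$3$ homogeneous product $P_C=F(\ell_1)F(\ell_2)F(\ell_3)$ of linear ``falsity indicators,'' and use the single Boolean constraint $x_i(x_i-x_0)$ uniformly in every characteristic. The paper instead reduces from the intermediate problem $\lang{Boolsys}$ (equations of the form $X_i=\text{True}$, $X_i=\neg X_j$, $X_i=X_j\vee X_k$), which keeps \emph{every} polynomial at degree exactly~$2$, at the price of two slightly different encodings for characteristic $2$ and characteristic $\neq 2$. Your version is more uniform and arguably simpler for the lemma in isolation. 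What the paper's choice buys is structural: the resulting systems consist entirely of degree-$2$ polynomials with small integer coefficients, and the proofs in Section~\ref{sec:deter} reuse this specific form verbatim --- the square system $g$ is built by appending monomials $\pm y_{i-1}^2+\lambda y_i^2$ to the degree-$2$ components $f_{n+i}$, and the choice $\lambda=3$ over $\ZZ$ relies on the explicit range $f_{n+i}(a)\in\{-4,0,2,4\}$ arising from the $\lang{Boolsys}$ gadgets. With your degree-$3$ clause polynomials those constructions would need minor adaptation (e.g.\ balancing degrees with $y_i^3$, recomputing the coefficient ranges), which is precisely why the paper remarks just before the lemma that ``the specific form of the systems that we construct in the reduction will be useful in the sequel.'' Your argument that any nontrivial solution has $a_0\neq 0$ is exactly the right observation and matches the paper's reasoning; both proofs are sound.
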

In~\cite{Koi00-dimensions},  
this lemma is proved using a reduction from the language
\lang{Boolsys}. An input of \lang{Boolsys} 
is a system of boolean equations in the variables $X_1,\dots,X_n$ where each equation is of the form 
$X_i=\text{True}$, $X_i=\neg X_j$, or $X_i=X_j\vee X_k$. 
The question is the existence of a valid assignment for the system, that is an assignment
of the variables such that each equation is satisfied. This problem is easily shown $\NP$-hard by reduction from $\lang{3-SAT}$.
We now give a proof of this lemma since the specific form of the systems 
that we construct in the reduction will be useful in the sequel.
This proof is a slight variation on the proof from~\cite{Koi00-dimensions}.
\begin{proof} 
Let \KK be a field of any characteristic $p$, $p$ being either zero or a prime number. 
At first, $p$ is supposed to be different from $2$. The proof has to be
slightly changed in the case $p=2$ and this case is explained at the end of the proof.

Let $\mathcal B$ be an instance of \lang{Boolsys}. Let us define a system of homogeneous polynomials from this instance with the property that 
$\mathcal B$ is satisfiable if and only if the polynomial system has a non trivial common root. The variables in the system are $x_0,\dots,x_n$ where $x_i$, 
$1\le i\le n$, corresponds to the boolean variable $X_i$ in \lang{Boolsys}, and $x_0$ is a new variable. The system contains four kinds of polynomials:
\begin{itemize}
\item $x_0^2-x_i^2$, for each $i>0$;
\item $x_0\cdot(x_i+x_0)$, for each equation $X_i=\text{True}$; 
\item $x_0\cdot(x_i+x_j)$, for each equation $X_i=\neg X_j$;
\item $(x_i+x_0)^2-(x_j+x_0)\cdot(x_k+x_0)$, for each equation $X_i=X_j\vee X_k$.
\end{itemize}

Let us denote by $f$ the polynomial system obtained from $\mathcal B$. The first kind of polynomials
ensures that if $(a_0,\dots,a_n)$ is a non trivial root of $f$, then $a_0^2=a_1^2=\cdots=a_n^2$. Now if $f$ has a non trivial root $(a_0,\dots,a_n)$, 
then one can readily check that the assignment $X_i=\text{True}$ if $a_i=-a_0$ and $X_i=\text{false}$ if $a_i=a_0$ satisfies $\mathcal B$. 
Conversely, if there is a valid assignment $X_1,\dots,X_n$ for $\mathcal B$, any $(n+1)$-tuple $(a_0,\dots,a_n)$ where $a_0\neq 0$ and $a_i=-a_0$ 
if $X_i=\text{True}$ and $a_i=a_0$ if $X_i=\text{false}$ is a non trivial root of $f$.

This proof works for any field of characteristic different from $2$. The problem in characteristic $2$ is the implementation of \lang{Boolsys} in terms of
a system of polynomials. Indeed, for the other characteristics, the truth is represented by $-a_0$ and the falseness by $a_0$. In characteristic $2$, those 
values are equal. Yet, one can just change the polynomials and define in the case of characteristic $2$ the following system:
\begin{itemize}
\item $x_0x_i-x_i^2$, for each $i>0$;
\item $x_0(x_i+x_0)$, for each equation $X_i=\text{True}$; 
\item $x_0(x_i+x_j+x_0)$, for each equation $X_i=\neg X_j$;
\item $x_i^2+x_jx_k+x_0\cdot(x_j+x_k)$, for each equation $X_i=X_j\vee X_k$.
\end{itemize}
Now, given any nonzero value $a_0$ for $x_0$, the truth of a variable $X_i$ is represented by $x_i=a_0$ whence the falseness is represented by $x_i=0$.
A root of the system is in particular a root of the polynomials defined by the first item. Therefore each $x_i$ has to be set either to $a_0$ or to $0$. 
The system has a non trivial root if and only if the instance of \lang{Boolsys} is satisfiable. 
\end{proof} 

\subsection{A randomized reduction} \label{sec:rand} 

We now give the first of our three $\NP$-completeness results in positive characteristic. The proof also applies to  characteristic zero, but in this case
Theorem~\ref{thm:folklore} is preferable (its proof is simpler and the $\NP$-hardness result stronger since it relies on deterministic reductions).
For more on randomized reductions, see~\cite{AroraBarak}.

We begin with a result on algebraic varieties in algebraic closed fields. This is a classical result in algebraic geometry~\cite{Sha94}, see also~\cite[Proposition~1]{Koi00-circuits} for a proof in our language.
\begin{theorem}\label{thm:variety}
Let \KK be an algebraically closed field and $V$ an algebraic variety of $\KK^{n+1}$ defined by a set of homogeneous degree-$d$ polynomials 
$f_1,\dots,f_s\in\KK[x_0,\dots,x_n]$. This variety can be defined by $(n+1)$ homogeneous degree-$d$ polynomials $g_1,...,g_{n+1} \in \KK[x_0,\dots,x_n]$.
Moreover, suitable $g_i$'s  can be obtained by taking generic linear combinations of the $f_i$.  That is, we can take $g_i=\sum_{j=1}^s \alpha_{ij}f_j$
where $(\alpha_{ij})$ is a matrix of elements of $\KK$, 
and the set of suitable matrices is Zariski-dense in $\KK^{s(n+1)}$.
\end{theorem}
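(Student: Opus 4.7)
The plan is to show the reverse inclusion $V(g_1,\dots,g_{n+1}) \subseteq V$ for a Zariski-dense set of matrices $\alpha \in \KK^{s(n+1)}$, since the forward inclusion $V \subseteq V(g_1,\dots,g_{n+1})$ is immediate from writing each $g_i$ as a linear combination of the $f_j$. Exploiting homogeneity, I would pass to the projective variety $V' \subset \mathbb{P}^n$ cut out by $f_1,\dots,f_s$; the degenerate cases $V' = \mathbb{P}^n$ and $V = \{0\}$ can be handled separately by inspection.

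I would then study the incidence variety
\[
I = \bigl\{(x,\alpha) \in (\mathbb{P}^n \setminus V') \times \KK^{s(n+1)} : g_1(x) = \cdots = g_{n+1}(x) = 0 \bigr\}
\]
by projecting onto its first factor. For a fixed $x \notin V'$ some $f_j(x)$ is nonzero, so each equation $g_i(x) = \sum_j \alpha_{ij} f_j(x) = 0$ is a nontrivial linear equation in the $i$-th row $(\alpha_{i1},\dots,\alpha_{is})$ of $\alpha$; since the $n+1$ rows of $\alpha$ are disjoint blocks of coordinates, these equations are linearly independent. The fiber over $x$ is therefore a linear subspace of codimension exactly $n+1$ in $\KK^{s(n+1)}$, uniformly in $x$, so
\[
\dim I \;\le\; \dim(\mathbb{P}^n \setminus V') + s(n+1) - (n+1) \;\le\; s(n+1) - 1.
\]

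By Chevalley's theorem the projection of $I$ onto $\KK^{s(n+1)}$ is constructible, and since it has dimension at most $s(n+1) - 1$ its complement contains a nonempty Zariski-open subset of $\KK^{s(n+1)}$ and is therefore Zariski-dense. For any $\alpha$ in this complement no point of $\mathbb{P}^n \setminus V'$ satisfies all $g_i(x) = 0$, so $V(g_1,\dots,g_{n+1})$ and $V'$ agree projectively, and taking affine cones gives the desired equality in $\KK^{n+1}$. The only real care needed is in the fiber dimension calculation---establishing that \emph{every} fiber has codimension exactly $n+1$ rather than merely at most $n+1$---which is where the disjointness of the row blocks of $\alpha$ is essential; beyond that the argument is routine bookkeeping.
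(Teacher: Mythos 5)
Your argument is correct. Note that the paper does not prove Theorem~\ref{thm:variety} itself: it flags the statement as a classical fact of algebraic geometry and refers the reader to Shafarevich and to \cite[Proposition~1]{Koi00-circuits}, so there is no in-paper proof to compare against. Your incidence-variety dimension count is the standard route to such Bertini-type statements, and the details check out: the forward inclusion is trivial; for the reverse inclusion, passing to $\mathbb{P}^n$ and observing that over each $x\notin V'$ the $n+1$ conditions $\sum_j\alpha_{ij}f_j(x)=0$ are nontrivial and involve pairwise disjoint blocks of coordinates of $\alpha$ gives every fiber exact codimension $n+1$, so $\dim I\le n + s(n+1)-(n+1)=s(n+1)-1$; by Chevalley the image of $I$ in $\KK^{s(n+1)}$ is a constructible set of dimension strictly less than $s(n+1)$, hence its complement contains a nonempty Zariski-open set. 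That complement is precisely the set of suitable matrices, which even gives slightly more than the claimed Zariski-density. The degenerate cases you flag ($V'=\mathbb{P}^n$, $V=\{0\}$) are indeed trivial, and in fact the general dimension count subsumes $V=\{0\}$; the argument also covers $s\le n+1$ with no modification. One minor stylistic point: for the bound $\dim I\le s(n+1)-1$ you only need fiber codimension \emph{at least} $n+1$, which is what the disjoint-blocks observation delivers; establishing that it is exactly $n+1$ is a freebie rather than the load-bearing fact.
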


This result leads us to out first $\NP$-hardness result in arbitrary characteristic.

\begin{theorem}\label{thm:randomized}
Let $p$ be either zero or a prime number. 
The following problem is $\NP$-hard under randomized reductions:
\begin{itemize}
\item \emph{Input:} A square system of homogeneous equations with coefficients
in a finite extension of  $\FFp$.
\item \emph{Question:} Is the system satisfiable in the algebraic closure of $\FFp$?
\end{itemize}
In
the case $p=0$, the results also holds for systems with coefficients in~$\ZZ$.
\end{theorem}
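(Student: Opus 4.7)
The plan is to combine Lemma~\ref{lemma:h2n} with Theorem~\ref{thm:variety} via a Bertini-style random linear combination. First I would invoke Lemma~\ref{lemma:h2n} on a given instance $\mathcal{B}$ of \lang{Boolsys} (or $\lang{3-SAT}$) to produce a homogeneous system $f_1,\dots,f_s\in\FFp[x_0,\dots,x_n]$ of degree at most $2$, whose nontrivial satisfiability over $\overline{\FFp}$ is equivalent to the satisfiability of $\mathcal{B}$, with $s$ and $n$ polynomial in the size of $\mathcal{B}$. The randomized reduction then outputs the square system $g_1,\dots,g_{n+1}$ defined by $g_i = \sum_{j=1}^s \alpha_{ij} f_j$, where $A=(\alpha_{ij})$ is drawn uniformly at random from $\mathbb{F}_{q}^{(n+1)\times s}$ for $q = p^k$ with $k$ to be chosen below; in characteristic zero one instead samples $\alpha_{ij}$ uniformly from $\{0,1,\dots,N-1\}\subset\ZZ$ for a suitable polynomial bound $N$.

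For correctness, note that any nontrivial common zero of the $f_j$ is automatically a nontrivial common zero of the $g_i$, so the ``yes'' direction holds for every choice of $A$. For the converse, Theorem~\ref{thm:variety} gives a Zariski-closed proper subset $\mathcal{E}\subset\overline{\FFp}^{s(n+1)}$ such that for every $A\notin\mathcal{E}$ the polynomials $g_1,\dots,g_{n+1}$ cut out the same variety in $\overline{\FFp}^{n+1}$ as the $f_j$; in particular, if the $f_j$ have only the trivial common zero, so do the $g_i$. Thus the reduction is correct whenever $A\notin\mathcal{E}$.

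The main technical step is to quantify the failure probability. What is needed is a bound $D$, polynomial in $n,s$ and the degree $d\le 2$, on the degree of (the equations defining) $\mathcal{E}$. The standard route is to set up the incidence correspondence
\begin{equation}
\mathcal{I} = \{(A,[x])\in \overline{\FFp}^{s(n+1)}\times\mathbb{P}^n : g_i(A,x)=0 \text{ for all } i\},
\end{equation}
observe that the $g_i$ are bilinear in $(A,x)$ of $x$-degree $d$, and apply a Bezout-style bound to the projection $\mathcal{I}\to\overline{\FFp}^{s(n+1)}$ to estimate the degree of the locus where the $g_i$ acquire spurious nontrivial zeros. Given such a polynomial $D$, the Schwartz--Zippel lemma yields $\Pr[A\in\mathcal{E}]\le D/q$, and choosing $k=\lceil\log_p(2D)\rceil$ brings the failure probability below $1/2$ while keeping the description of $\mathbb{F}_q$ (and hence of each coefficient of each $g_i$) polynomial in the input size. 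The characteristic-zero case is analogous, using Schwartz--Zippel over the integer range $\{0,\dots,N-1\}$ with $N=\Theta(D)$, which keeps the bit-size of the coefficients of the $g_i$ polynomial.

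The step I expect to be the main obstacle is the quantitative degree bound on $\mathcal{E}$: a Zariski-density statement of the kind in Theorem~\ref{thm:variety} is qualitative, and turning it into a usable bound of the form $\Pr[A\in\mathcal{E}]\le D/q$ requires an effective Bertini-type argument. Everything else—the initial reduction, the fact that only the ``no'' instances can be corrupted, and the Schwartz--Zippel estimate—is routine once such a degree bound is in hand.
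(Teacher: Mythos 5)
Your overall strategy matches the paper's: reduce from \lang{Boolsys} via Lemma~\ref{lemma:h2n}, take random linear combinations $g_i=\sum_j\alpha_{ij}f_j$, invoke Theorem~\ref{thm:variety} for Zariski-density, and apply Schwartz--Zippel. You also correctly identify the crux — turning the qualitative density statement into an effective degree bound on the ``bad'' locus $\mathcal E$. Where you diverge (and go astray) is in the quantitative target: you expect a bound $D$ \emph{polynomial} in $n,s,d$, and you propose to obtain it by an incidence-correspondence/Bezout argument. That expectation is too optimistic. The paper instead observes that the condition on $\alpha$ is first-order definable over $\overline{\FFp}$ and invokes the effective quantifier-elimination bounds of Fitchas--Galligo--Morgenstern to write it as a disjunction of at most $2^{\poly(n,\log(s+n))}$ clauses involving polynomials of degree at most $2^{\poly(n,\log(s+n))}$; density then forces one clause of the form $\bigwedge_m Q_m(\alpha)\neq 0$ to define a dense open set, and $Q=\prod_m Q_m$ has degree $2^{O(\poly(n,\log(s+n)))}$. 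This is exponential, not polynomial, but that is all you need: it suffices that $\log D$ be polynomial, so that an extension $\KK/\FFp$ of polynomial degree (constructed deterministically by Shoup's algorithm) has enough elements for Schwartz--Zippel to give failure probability $\le 1/4$, while keeping coefficient sizes polynomial. (Even the sharper Krick--Pardo--Sombra bound used in the paper's characteristic-zero remark is $3^{n+1}$, still exponential in $n$.) One further small imprecision: Theorem~\ref{thm:variety} gives density of the good set, not directly that the bad set is Zariski-closed — the paper's quantifier-elimination route is precisely what lets you extract a Zariski-open dense subset from the constructible good set, with an explicit bound. Your Bezout/incidence-correspondence sketch is a plausible alternative path, but as written it is not carried through, and it would also not deliver a polynomial $D$; so the gap you flag is real, and the resolution is to aim for (and accept) a singly-exponential bound via effective quantifier elimination rather than a polynomial one.
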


\begin{proof} 
Lemma \ref{lemma:h2n} shows that it is $\NP$-hard to decide whether a non square polynomial system $f$ with coefficients in \FFp has a non trivial root.
From $f$, a square system $g$ is built in randomized polynomial time.

Let us denote by $f_j$, $1\le j\le s$, the components of $f$. They are homogeneous polynomials in $\FFp[x_0,\dots,x_n]$. The components of $g$ are 
defined by 
\begin{equation} g_i=\sum_{j=1}^s \alpha_{ij}f_j \end{equation}
for $0\le i\le n$. In the sequel, we explain how to choose the $\alpha_{ij}$'s for $f$ and $g$ to be equivalent. 
For any choice of the $\alpha_{ij}$'s a root of $f$ is a root of $g$. Thus it is sufficient to show how to choose them so that $g$ has no non
trivial root if the same is true for $f$.

The property the $\alpha_{ij}$'s have to satisfy is expressed by the first-order formula
\begin{equation} \Phi(\alpha)\equiv\forall x_0\cdots\forall x_n 
    \left(\bigwedge_{j=1}^s f_j( x)=0\right)\vee\left(\bigvee_{i=0}^n\sum_{j=1}^s\alpha_{ij}f_j( x)\neq0\right). \end{equation}
The formula $\Phi$ belongs to the language of the first-order theory of 
the algebraically closed field $\overline{\FFp}$. This theory eliminates quantifiers and 
$\Phi(\alpha)$ is therefore equivalent to a quantifier-free formula of the form
\begin{equation} \Psi(\alpha)\equiv\bigvee_k\left(\bigwedge_{l} P_{kl}(\alpha)=0\wedge\bigwedge_m Q_{km}(\alpha)\neq0\right), \end{equation}
where $P_{kl},Q_{km}\in
\FFp[\alpha]$. As a special case of \cite[Theorem 2]{FGM90}, one can bound the number of polynomials in $\Psi$ as well 
as their degrees by $2^{\poly(n,\log(s+n))}$ where 
$\poly$ represents some polynomial independent from $\Phi$.

Theorem \ref{thm:variety} 
shows that   the set $A$ of tuples satisfying $\Phi$ is Zariski-dense in $\overline{\FFp}^{s(n+1)}$.
Since $A$ is dense, and $A$ is also defined by $\Psi$, one of the clauses 
of $\Psi$ must define a Zariski dense subset of $\overline{\FFp}^{s(n+1)}$.
This clause is of the form $\bigwedge_m Q_m(\alpha)\neq 0$.

To satisfy $\Phi$, it is sufficient for the $\alpha_{ij}$'s to avoid the roots of a polynomial $Q=\prod_m Q_m$. As mentioned before, it is known that
$\Psi$ contains at most $2^{\poly(n,\log(s+n))}$ polynomials of degree at most $2^{\poly(n,\log(s+n))}$. Thus, $Q$ is a polynomial of degree at most 
$2^{2\poly(n,\log(s+n))}$. Consider now a finite extension \KK of \FFp with at least $2^{2+2\poly(n,\log(s+n))}$ elements (that is, of polynomial degree). 
If we choose the $\alpha_{ij}$'s uniformly at random in \KK, then with probability at least $3/4$ they are not a root of $Q$ (by the Schwartz-Zippel Lemma).
Thus with the same probability, they satisfy $\Phi$. Note that $\KK$ can be built in polynomial-time with Shoup's algorithm \cite{Sho90} 
when $p$ is prime (for $p=0$, we take of course $\KK = \QQ$).

To sum up, we build from $f$ a square system $g$ defined by random linear combinations of the components of $f$. If $f$ has a non trivial root, then it is
a root of $g$ too. Conversely, if $f$ has no non trivial root, then with probability at least $3/4$ it is also the case that $g$ has no nontrivial root.
\end{proof} 

In characteristic zero 
the bounds in the above proof can be sharpened: instead of appealing to the general-purpose quantifier elimination result of~\cite{FGM90} we can use 
a result of \cite{KPS01}. Indeed, it follows from Section~4.1 of~\cite{KPS01} 
that there exists
a polynomial $F$ of degree at most $3^{n+1}$ such that $F(\alpha)\neq 0$ implies that $g$ has no non trivial root as soon as it is true for $f$.
This polynomial plays the same role as $Q$ in the previous proof but the bound on its degree is sharper.

\subsection{
Deterministic Reductions} \label{sec:deter} 

We now improve the $\NP$-hardness result of Section~\ref{sec:rand}: we show
that the same problem is $\NP$-hard {\em for deterministic reductions}.
This result is not only stronger, but also the proof is more elementary
(there is no appeal to effective quantifier elimination).

Recall from the introduction that for a field $\KK$, $\hhnsq(\KK)$ is the following problem:
\begin{itemize}
\item \emph{Input:} A square system of homogeneous equations with coefficients
in $\KK$.
\item \emph{Question:} Is the system satisfiable in the algebraic closure of $\KK$?
\end{itemize}

\begin{theorem}
Let $p$ be either zero or a prime number. There exists a finite extension $\KK$ of $\FFp$ such that $\Resultant(\KK)$ is $\NP$-hard.
\end{theorem}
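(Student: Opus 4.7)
The plan is to reduce from $\hhn(\FFp)$, which is $\NP$-hard by Lemma~\ref{lemma:h2n}. I start with the system $f_1,\dots,f_s\in\FFp[x_0,\dots,x_n]$ of degree-$2$ homogeneous polynomials produced by the proof of that lemma; in general $s>n+1$, so this system is not square. Following strategy~(ii) from the introduction, I add $m:=s-n-1$ new variables $y_1,\dots,y_m$ and build a square system $g_1,\dots,g_s$ in the $s$ variables $x_0,\dots,x_n,y_1,\dots,y_m$, with coefficients in a polynomial-size finite extension $\KK$ of $\FFp$ (constructible in polynomial time by Shoup's algorithm).

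The construction takes the form
\[
  g_j(x,y) \;=\; f_j(x) + P_j(x,y), \qquad j=1,\dots,s,
\]
where each $P_j$ is a homogeneous degree-$2$ polynomial over $\KK$ every one of whose monomials involves at least one of the new variables; in particular $P_j(x,0)\equiv 0$. This makes the forward direction of the equivalence immediate: if $x^*$ is a nontrivial root of $f$, then $(x^*,0)$ is a nontrivial root of $g$.

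The content of the proof is the converse: every nontrivial root $(x^*,y^*)$ of $g$ must yield a nontrivial root of $f$. Two kinds of spurious roots must be forestalled. Roots with $x^*=0$ and $y^*\neq 0$ collapse the system to $P_j(0,y^*)=0$ for all $j$; these can be ruled out by taking the ``$y$-only'' part of $P_j$ to be $\sum_{k=1}^{m}\alpha_{j,k}y_k^2$ with the matrix $(\alpha_{j,k})\in\KK^{s\times m}$ having every $m\times m$ submatrix invertible (for instance a Vandermonde matrix built from powers of a primitive element of $\KK^\times$). Overdeterminedness then forces $y^*=0$, a contradiction. Here one crucially uses the structural property of the system from Lemma~\ref{lemma:h2n} that any nontrivial root of $f$ necessarily has $x_0\ne0$.

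The main obstacle is the second kind of spurious root: a nontrivial $(x^*,y^*)$ with $x^*\ne 0$, $y^*\ne 0$, yet $F(x^*):=(f_j(x^*))_j\ne 0$. This is exactly the point where the randomized proof of Theorem~\ref{thm:randomized} invoked Zariski density of good linear combinations, and here the conclusion must be reached deterministically. A purely $y$-dependent perturbation will not suffice: the column space of a generic $s\times m$ matrix can still contain vectors of the highly constrained form taken by nonzero $F(x^*)$, since the first $n$ equations $x_0^2-x_i^2=0$ force each coordinate $x_i^*$ of a nontrivial $x^*$ to lie in $\{x_0^*,-x_0^*\}$, leaving at most $2^n$ distinct possibilities for $F(x^*)$ up to scaling. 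I will therefore allow each $P_j$ to contain mixed monomials coupling the $x_i$'s and $y_k$'s, and choose all coefficients from $\KK$ so as to make each of these finitely many shapes incompatible with $g(x^*,y^*)=0$. This is the derandomization step referred to in the introduction; the possibility of working over an extension $\KK$ rather than over $\FFp$ itself is precisely what provides the algebraic room to pin down an explicit set of coefficients that eliminates all spurious configurations.
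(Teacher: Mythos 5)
Your overall strategy matches the paper's (reduce from the Boolsys encoding of Lemma~\ref{lemma:h2n}, add $m = s-n-1$ new variables $y_1,\dots,y_m$, build a square system $g_j = f_j + P_j$ with each $P_j$ vanishing at $y=0$, and take coefficients in a polynomial-degree extension $\KK$ built by Shoup's algorithm). But the proposal goes wrong at the crucial step. Your claim that ``a purely $y$-dependent perturbation will not suffice'' is false, and the paper proves the theorem with exactly such a perturbation. Concretely, the paper keeps $g_j=f_j$ for $j\le n$ and takes a banded perturbation for $j>n$:
\begin{equation*}
g_{n+1}=f_{n+1}+\lambda y_1^2,\quad
g_{n+i}=f_{n+i}-y_{i-1}^2+\lambda y_i^2 \ (1<i<s-n),\quad
g_s=f_s-y_{s-n-1}^2.
\end{equation*}
For a nontrivial root $(a,b)$ of $g$, the first $n$ equations pin $a_i$ to $\pm a_0$ (or $\{0,a_0\}$ in characteristic $2$), so with $a_0=1$ every $\epsilon_i:=f_{n+i}(a)$ lies in $\FFp$. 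The equations $g_{n+i}(a,b)=0$ become a linear system in the $b_i^2$ whose determinant is $(-1)^{s-n-1}\bigl(\epsilon_1+\epsilon_2\lambda+\dots+\epsilon_{s-n}\lambda^{s-n-1}\bigr)$, a polynomial of degree $\le s-n-1$ in $\lambda$ with coefficients in $\FFp$. Choosing $\lambda$ to be the class of $X$ in $\KK=\FFp[X]/(P)$ for $P$ irreducible of degree $s-n$ makes this determinant nonzero whenever some $\epsilon_i\ne 0$, forcing $b=0$ and hence $f(a)=0$.

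In other words, there is no need to enumerate or to ``avoid'' the (exponentially many) possible shapes of $F(x^*)$: the banded structure turns root-avoidance in $\KK^{s-n}$ into root-avoidance of a \emph{single} univariate degree-$(s-n-1)$ polynomial over $\FFp$, and any $\lambda$ of degree $s-n$ over $\FFp$ automatically avoids all of these at once. Your proposed fix with mixed $x$--$y$ monomials is not developed, and your suggested derandomization (make each of the ``finitely many shapes'' incompatible) is not obviously a polynomial-time procedure since there are up to $2^n$ such shapes. The gap is genuine: the key algebraic idea of the paper — a structured perturbation matrix whose minor determinant is a low-degree univariate polynomial in a parameter, combined with the observation that an element of a degree-$(s-n)$ extension cannot be a root of a nonzero $\FFp$-polynomial of smaller degree — is missing from your argument, and without it the converse direction does not go through.
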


In the case $p=0$, this result also holds for systems with coefficients in~$\ZZ$.

\begin{proof} 
The proof of Lemma \ref{lemma:h2n} gives a method to implement an
instance of \lang{Boolsys} with a system $f$ of $s$ homogeneous polynomials in $n+1$ variables with coefficients in \FFp. 
It remains to explain how to construct a square system $g$ that has a non trivial root if an only if $f$ does. 
Let us denote by $f_1,\dots,f_s$ the components of $f$, 
with for each $i=1,\dots,n$, $f_i=x_0^2-x_i^2$ if $p\neq 2$ and $f_i=x_0x_i- x_i^2$ if $p=2$. A new system $g$ of $s$ polynomials in $s$ variables
is built. The $s$ variables are $x_0,\dots,x_n$ and $y_1,\dots,y_{s-n-1}$, that is $(s-n-1)$ new variables are added. The system $g$ is the following:
\begin{equation} g(x,y)=\left(\begin{array}{l@{}c@{}l}
f_1(x)&&\\
\quad\vdots&&\\
f_n(x)&&\\
f_{n+1}(x)&&+\lambda y^2_1\\
f_{n+2}(x)&-y^2_1&+\lambda y^2_2\\
&\vdots&\\
f_{n+i}(x)&-y_{i-1}^2&+\lambda y_i^2\\
&\vdots&\\
f_{s-1}(x)&-y^2_{s-n-2}&+\lambda y^2_{s-n-1}\\
f_s(x)&-y^2_{s-n-1}&
\end{array}\right) \end{equation}
The parameter $\lambda$ is to be defined later. Clearly, if $f$ has a non trivial root $a$, then $(a,0,\dotsc,0)$ is a non trivial root of $g$. 
Let us now prove that the converse also holds true for some $\lambda$: if $g$ has a non trivial root, then so does $f$. Note that a suitable $\lambda$
has to be found in polynomial time.

Let $(a_0,\dots,a_n,b_1,\dots,b_{s-n-1})$ be any non trivial root of $g$. 
In particular, $f_1(a)=\dotsb=f_n(a)=0$. Hence
$a_0^2=\dots=a_n^2$ if $p\neq 2$, and $a_i \in \{0,a_0\}$ for every $i$ if $p=2$. 
Now, either $a_0=0$ and $f_i(a)=0$ for every $i$, or $a_0$ can be supposed to equal $1$.
Therefore, if $p\neq 2$ either $a=\bar 0$ or $a_i=\pm 1$ for every $i$, and if $p=2$ either $a=\bar 0$ or $a_i\in \{0,1\}$ for every $i$. 
Let us define $\epsilon_i=f_{n+i}(a)\in\FFp$. As $(a,b)$ is a root of $g$, the $b_i^2$'s satisfy the linear system
\begin{equation} \left\{\begin{matrix}
\epsilon_1 && &+&  \lambda Y_1 &=&0,\\
\epsilon_2 &-& Y_1 &+& \lambda Y_2 &=&0,\\
&&&\vdots\\
\epsilon_{s-n-1} &-& Y_{s-n-2} &+& \lambda Y_{s-n-1} &=&0,\\
\epsilon_{s-n} &-& Y_{s-n-1} && &=&0.
\end{matrix}\right. \end{equation}
This system can be homogenized by replacing each $\epsilon_i$ by $\epsilon_i Y_0$ where $Y_0$ is a fresh variable. This
gives a square homogeneous linear system. The determinant of the matrix of this system is equal to 
\[(-1)^{s-n-1}\left(\epsilon_1+\epsilon_2\lambda+\cdots+\epsilon_{s-n}\lambda^{s-n-1}\right).\]

Let us consider this determinant as a polynomial in $\lambda$. This polynomial vanishes identically if and only if all the $\epsilon_i$'s are zero. In that case, the 
only solutions satisfy $Y_i=0$ for $i>0$, that is $(a,\bar 0)$ is a root of $g$ and therefore $a$ is a root of $f$. 
If some $\epsilon_i$'s are nonzero, this is a nonzero polynomial of degree $(s-n-1)$. If $\lambda$ can be chosen such that it is not a root of this polynomial
(for any possible nonzero value of $\bar\epsilon$), then the only solution to the linear system is the trivial one. This means that the only non trivial
root of $g$ is $(a,\bar 0)$ where $a$ is a root of $f$.

If the polynomials have coefficients in \ZZ,
$\lambda=3$ (or any other integer $\lambda>2$) satisfies the condition. Indeed, one can check that $\epsilon_i=
f_{n+i}(a)\in\{-4,0,2,4\}$ when $a_0=1$. The determinant is zero if and only if $\epsilon'_1+\epsilon'_2\lambda+\cdots+\epsilon'_{s-n}\lambda^{s-n-1}=0$
where $\epsilon'_i=\epsilon_i/2\in\{-2,0,1,2\}$.
For each $i$, let $\epsilon_i^+=\max\{\epsilon'_i,0\}$ and $\epsilon_i^-=\max\{-\epsilon'_i,0\}$. Then $\epsilon'_i=\epsilon_i^+-\epsilon_i^-$, 
and $0\le \epsilon_i^+,\epsilon_i^-\le 2$. Now the determinant is zero if and only if $\sum_i\epsilon_i^+ 3^i=\sum_i\epsilon_i^- 3^i$. By the
unicity of base-$3$ representation, this means that for all $i$, $\epsilon_i^+=\epsilon_i^-$, and so $\epsilon'_i=0$.

For a field of positive characteristic, this argument cannot be applied. The idea is to find a $\lambda$ that is not a root of any polynomial
of degree $(s-n-1)$. Nothing else can be supposed on the polynomial because if $p=3$ for example, any polynomial of $\FF_{\!3}[\lambda]$ can appear.
This also shows that $\lambda$ cannot be found in the ground field. Suppose an extension of degree $(s-n)$ is given as $\FFp[X]/(P)$ where $P$ is an
irreducible degree-$(s-n)$ polynomial with coefficients in $\FFp$. Then a root of $P$ in $\FFp[X]/(P)$ cannot be a root of a degree-$(s-n-1)$
polynomial with coefficients in $\FFp$. Thus, if one can find such a $P$, taking for $\lambda$ the indeterminate $X$ is sufficient. For any fixed
characteristic $p$, Shoup gives a deterministic polynomial-time algorithm \cite{Sho90} that given an integer $N$ outputs a degree-$N$ irreducible 
polynomial $P$ in $\FFp[X]$. Thus, the system $g$ is now a square system of polynomials in $\left(\FFp[X]/(P)\right)[x,y]$ and this system
has a non trivial root if and only if $f$ has a non trivial root. And Shoup's algorithm allows us to build $g$ in polynomial time from $f$.

For any field \FFp, it has been shown that from an instance $\mathcal B$ of \lang{Boolsys} a square system $g$
of polynomials with coefficients in an extension of \FFp 
(in \ZZ for integer polynomials)
can be built in deterministic polynomial time such that $g$ has a non trivial root if and only if 
$\mathcal B$ is satisfiable. This shows that the problem is $\NP$-hard.  
\end{proof} 

The previous result is somewhat unsatisfactory as it requires, in the case of positive characteristic, to work with coefficients in an extension field rather than in the
ground field. A way to get rid of this limitation is now shown. Yet, a property of the previous result is lost. Instead of having constant-degree (even
degree-$2$) polynomials, our next result uses linear-degree polynomials. 
It is not clear whether the same result can be obtained 
for degree-$2$ polynomials (for instance, as explained at the end of Section~\ref{char0} reducing the degree by introducing new variables can create unwanted
solutions at infinity).

The basic idea behind Theorem~\ref{groundfield} is quite simple (we put the irreducible polynomial used to build the extension field into the system), but some care is required in order to obtain
an equivalent homogeneous system.

\begin{theorem} \label{groundfield}
For any prime $p$, $\hhnsq(\FFp)$ is $\NP$-hard 
under deterministic reductions.
\end{theorem}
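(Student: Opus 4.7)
My plan is to start from Lemma~\ref{lemma:h2n}, which produces $s$ homogeneous degree-2 polynomials $f_1,\dots,f_s\in\FFp[x_0,\dots,x_n]$ encoding a given \lang{Boolsys} instance, with the property that any non-trivial root must have $x_0\neq 0$. Setting $d:=s-n$, I will invoke Shoup's algorithm to compute in deterministic polynomial time a monic irreducible polynomial $P\in\FFp[z]$ of degree $d$. The main idea is to mimic the extension-field trick of the previous theorem while staying over $\FFp$: introduce a single new variable $z$ and impose the homogenization $P^h(z,x_0)=x_0^d P(z/x_0)=0$, so that any solution with $x_0\neq 0$ forces $z$ to be algebraic of degree exactly $d$ over $\FFp$. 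Simultaneously, the $s-n$ Boolean polynomials $f_{n+1},\dots,f_s$ will be collapsed into a single equation using $z$-powers, which is the crucial move that keeps the system square.

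Concretely, I will form the following square system of $n+2$ homogeneous polynomials in the $n+2$ variables $x_0,\dots,x_n,z$, all with coefficients in $\FFp$:
\begin{itemize}
\item $g_i=f_i$ for $i=1,\dots,n$, homogeneous of degree~$2$;
\item $g_{n+1}=\sum_{i=1}^d f_{n+i}(x)\,z^{i-1}\,x_0^{d-i}$, homogeneous of degree $d+1$;
\item $g_{n+2}=P^h(z,x_0)$, homogeneous of degree $d$.
\end{itemize}
The homogeneity of $g_{n+1}$ is forced: each summand has degree $2+(i-1)+(d-i)=d+1$. All degrees are linear in the input, matching the ``linear-degree'' claim made in the paragraph preceding the theorem.

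To prove equivalence, I will argue both directions. For the easy direction, if \lang{Boolsys} is satisfiable, Lemma~\ref{lemma:h2n} gives a root $a\in\FFp^{n+1}$ of $f$ with $a_0\neq0$; picking any root $\zeta\in\bar\FFp$ of $P(z/a_0)$ yields a non-trivial root $(a,\zeta)$ of $g$. For the converse, I will split on $x_0$. If $x_0=0$, the polynomials $g_1,\dots,g_n$ force all $x_i=0$, and monicity of $P$ gives $P^h(z,0)=z^d$, which forces $z=0$; only the trivial solution arises, a contradiction. If $x_0\neq0$, I normalize $x_0=1$; then $P(\zeta)=0$ makes $\zeta$ algebraic of degree exactly $d$ over $\FFp$ since $P$ is irreducible, while $g_{n+1}=0$ reads $\sum_{i=1}^d f_{n+i}(a)\,\zeta^{i-1}=0$. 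This is a polynomial in $\zeta$ of degree less than $d$ with coefficients in $\FFp$, which must therefore vanish identically; consequently $f_{n+i}(a)=0$ for every $i$, and combined with $f_1(a)=\dots=f_n(a)=0$ this gives a non-trivial root of $f$, hence a satisfying assignment.

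The hard part, foreshadowed by the remark that ``some care is required in order to obtain an equivalent homogeneous system'', is precisely the $x_0=0$ case: a naive homogenization of $P(z)=0$ alongside the $y$-equations from the previous theorem introduces spurious solutions at infinity. My construction avoids this entirely by (i) making $P$ monic so that $P^h$ collapses to $z^d$ when $x_0=0$, and (ii) merging the $f_{n+i}$'s into a single equation rather than keeping them as separate $y$-equations that would degenerate to $0=0$ under such specialization. The reduction is deterministic polynomial time because Shoup's algorithm~\cite{Sho90} runs in polynomial time for any fixed prime~$p$, establishing $\NP$-hardness of $\Resultant(\FFp)$ under deterministic reductions.
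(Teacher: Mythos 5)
Your proof is correct and takes a genuinely different, and in my view cleaner, route than the paper's. The paper keeps the $s-n$ extra Boolean-encoding polynomials as separate equations chained together with $s-n-1$ auxiliary variables $y_1,\dots,y_{s-n-1}$, together with one more variable $\lambda$ and the homogenized constraint $P(\lambda,x_0)$; the delicate part is choosing the exponents on the $y_i$ (each $y_i$ appears only at power $s-n-i+1$) so the system is homogeneous yet still behaves like a linear system in the $b_i^{s-n-i+1}$, with determinant $\pm(\epsilon_1+\epsilon_2\lambda+\dots+\epsilon_{s-n}\lambda^{s-n-1})$, whose nonvanishing is exactly what the irreducibility of $P$ enforces. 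That construction has $s+1$ variables and $s+1$ equations. You bypass the auxiliary $y_i$ entirely: you collapse $f_{n+1},\dots,f_s$ into the single homogeneous polynomial $\sum_{i=1}^{d} f_{n+i}(x)\,z^{i-1}x_0^{d-i}$ and keep $P^h(z,x_0)$, giving a system of only $n+2$ variables and $n+2$ equations. The final step is the same in spirit --- a nontrivial $\FFp$-linear combination $\sum_i f_{n+i}(a)\zeta^{i-1}$ cannot vanish when $\zeta$ has minimal polynomial of degree $d$ over $\FFp$ --- but your route reaches it directly, without encoding the combination as the determinant of a homogenized linear system in the $y_i$. Both constructions have polynomials of degree $O(s-n)$, so the ``linear-degree'' regime is the same; your system is smaller and the homogenization bookkeeping is substantially lighter, since the only place monicity of $P$ is needed is to rule out roots at $x_0=0$ via $P^h(z,0)=z^d$. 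A minor remark: your argument implicitly assumes $d=s-n\ge 1$; if $s\le n$ the original system is already square or underdetermined and the reduction is trivial, so this is harmless, but it is worth stating.
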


\begin{proof} 
The idea for this result is to turn coefficient $\lambda$ in the previous proof into a variable and to add the polynomial $P$ as a component of the
system. Of course, considering $\lambda$ as a variable implies that the polynomials are not homogeneous anymore. Thus, it remains to explain how to keep
the system homogeneous.

First, the polynomial $P$ needs to be homogenized. This is done through the variable $x_0$ in the canonical way. As $P(\lambda)$ is irreducible, it is in 
particular not divisible by $\lambda$. Hence, the homogenized polynomial $P(\lambda,x_0)$ contains a monomial $\alpha\lambda^d$ and another one 
$\beta x_0^d$ where $d$ is the degree of $P$. Hence $x_0$ is zero if and only if $\lambda$ is.

The other polynomials have the form $f_{n+i}(x)-y_{i-1}^2+\lambda y_i^2$. It is impossible to homogenize those polynomials by multiplying $f_{n+i}$
and $y_{i-1}^2$ by $x_0$ (or any other variable) because then the variable $y_{i-1}$ never appears alone in a monomial, and a $s$-tuple with all variables
set to $0$ but $y_{i-1}$ would be a non trivial solution. Moreover, in the previous proof, the fact that the $y_i$'s all appear with degree $2$ is used to
consider the system as a linear system in the $y_i^2$. Thus replacing the monomial $\lambda y_i^2$ by $\lambda y_i$ does not work either. 
Instead, we construct the slightly more complicated homogeneous system:
\begin{equation} g_h(x,y,\lambda)=\left(\begin{array}{l@{}c@{}l}
f_1(x)&&\\
\quad\vdots&&\\
f_n(x)&&\\
x_0^{s-n-1}f_{n+1}(x)&&+\lambda y_1^{s-n}\\
x_0^{s-n-2}f_{n+2}(x)&-y_1^{s-n}&+\lambda y_2^{s-n-1}\\
&\vdots&\\
x_0^{s-n-i}f_{n+i}(x)&-y_{i-1}^{s-n-i+2}&+\lambda y_i^{s-n-i+1}\\
&\vdots&\\
x_0f_{s-1}(x)&-y_{s-n-2}^3&+\lambda y_{s-n-1}^2\\
f_s(x)&-y^2_{s-n-1}&\\
P(\lambda,x_0)
\end{array}\right) \end{equation}
Contrary to the previous proof, the $y_i$'s do not appear all at the same power. Yet, all the occurrences of each $y_i$ 
have the same degree, and we shall prove that this is sufficient.

Let us prove that if $f$ does not have any non trivial root, then neither does $g_h$. Some of the observations made
for $g$ in the previous proof remain valid. Hence, it is sufficient to prove that a non trivial $(s+1)$-tuple $(a,b,\ell)$ cannot be solution of $g_h$ whenever
$a_0=1$, $b\neq\bar 0$ and $a_0^2=\cdots =a_n^2$ if $p\neq 2$ or $a_i\in \{0,a_0\}$ if $p=2$.
By a previous remark on the polynomial $P$, $\ell$ can also be supposed to be nonzero.

So, similarly as in the previous proof, let us define $\epsilon_i=a_0^{s-n-i}f_{n+i}(a)\in\FFp$. 
In the system $g_h$, the variable $y_i$ only appears at the power 
$(s-n-i+1)$. Therefore, given a value of $a$ and $\ell$, the tuple $(a,b,\ell)$ is a root of $g_h$ if and only if the $b_i^{s-n-i+1}$'s satisfy the 
linear system
\begin{equation} \left\{\begin{matrix}
\epsilon_1 && &+&  \ell Y_1 &=&0\\
\epsilon_2 &-& Y_1 &+& \ell Y_2 &=&0\\
&&&\vdots\\
\epsilon_{s-n-1} &-& Y_{s-n-2} &+& \ell Y_{s-n-1} &=&0\\
\epsilon_{s-n} &-& Y_{s-n-1} && &=&0
\end{matrix}\right. \end{equation}
This is the same system as in the previous proof. Now if $(\ell,1)$ is supposed to be a root of $P$, as $P$ is an irreducible polynomial of degree $(s-n)$,
$\ell$ cannot be a root of a univariate polynomial of degree less than $(s-n)$ with coefficient in $\FFp$. But the determinant of the linear system is such
a polynomial, and thus cannot be zero. This determinant is then $0$ if and only if all the $\epsilon_i=0$. The same arguments as in the previous proof can be used to
conclude that $(a,b,\ell)$ can be a root of $g_h$ if and only if $a$ is a root of $f$.

Thus, from an instance $\mathcal B$ of \lang{Boolsys}, a square homogeneous system $g_h$ of polynomials with coefficients in the ground field \FFp is built
in deterministic polynomial time. This system has a non trivial root if and only if $\mathcal B$ is satisfiable. The result is proved.
\end{proof}

\section{Macaulay matrices}\label{sec:macaulay}

There exist several formulations of the multivariate resultant. More precisely, there are quite a lot of determinant-based formulations but they give in certain cases only multiple of the resultant: It can happen that the determinant that is computed vanishes even though the polynomial system has no root~\cite{KS95,DD01,BD04,CD05,JS07}. Some exact formulations exist based on Macaulay matrices. The resultant can be computed as the gcd of the determinants of several Macaulay matrices (viewed as polynomials)~\cite{Can88-robot}, or as the quotient of two determinants of Macaulay matrices~\cite{vanderWaerden}. Note also that this quotient can actually be turned into a single determinant using Strassen's method to eliminate divisions~\cite{KK08}. But in all cases, a computation of the determinant of one or several Macaulay matrices is needed. Canny~\cite{Can88-robot} gives an exact algorithm to compute the resultant based on determinants of Macaulay matrices 
that runs in polynomial space. In Valiant's algebraic model of computation (see~\cite{Burgisser}), the resultant polynomials belongs to the class $\class{VPSPACE}$~\cite{KP09-real}, which can be seen as an equivalent to the boolean class $\PSPACE$ in this setting.

In this section, we shall prove that an improvement of 
Canny's algorithm for the computation of the resultant polynomial 
is likely to require some new techniques, in particular techniques not based on Macaulay matrices.
More precisely, we show that despite being of exponential size, Macaulay matrices are efficiently representable by circuits. One could thus wish to use this small representations to get an efficient algorithm. Yet, we prove that computing the determinant of such matrices is in general $\PSPACE$-complete (and even testing it for zero). This holds in any characteristic. 
Therefore, there is an exponential blow-up of the complexity when the matrix is given as a circuit. Indeed, when the matrix is given in a standard (non succinct) way, the determinant is very closely related to the complexity class $\#\L$ which is the counting version of $\L$ (see~\cite{AO96} for more on this). 

This proves that improving the $\PSPACE$ upper bound requires either to look at the fine structure of Macaulay matrices and use an \emph{ad hoc} algorithm to compute their determinants, or to go through completely different methods.
The complexity of computing the determinant of succinctly represented matrices has recently been studied by Malod~\cite{Mal11} in the settings of Valiant's algebraic model of computation. He shows the $\class{VPSPACE}$-completeness of this problem.

\subsection{Representation of Macaulay matrices}

We first define the Macaulay matrices. This presentation follows \cite{Macaulay} and \cite{Can88-robot}. We consider a system $f$ of $n$ homogeneous polynomials in $\KK[x_1,\dots,x_n]$. Let $d_1$, \dots, $d_n$ be the respective degrees of the $f_i$'s, and $d=1+\sum_{i=1}^n (d_i-1)$ the \emph{degree of the system}. Let also $\Mon_d=\{x^\alpha:\alpha_1+\dotsb+\alpha_n=d\}$ the set of degree-$d$ monomials. Note that $|\Mon_d|=\binom{n+d-1}{d}$.

The Macaulay matrices depend on an ordering of the variables. Let us consider the ordering $x_1\prec x_2\prec\dotsb\prec x_n$. 
The matrix $\Mac$ is defined as follows: The rows and columns of $\Mac$ are indexed by the elements of $\Mon_d$, ordered by the reverse lexicographic order on the tuples $\alpha$: $\alpha<\alpha'$ if there exists $i$ such that $\alpha_i<\alpha'_i$ and for all $j>i$, $\alpha_j=\alpha'_j$. (There are thus $|\Mon_d|$ rows and columns.) The row of index $x^\alpha$ represents the polynomial
\begin{equation}\label{eq:MacRow}
\frac{x^\alpha}{x_i^{d_i}} f_i, \text{ where } i=\min\{j:x_j^{d_j} \text{ divides } x^\alpha\}.
\end{equation}
Note that $\{j:x_j^{d_j}\text{ divides }x^\alpha\}\neq\emptyset$ thanks to the definition of $d$. The other Macaulay matrices are similarly defined but with the minimum in the formula depending on another ordering of the variables. In particular, we can consider the $n$ orderings satisfying all but one of the $n$ inequalities $x_i\prec x_{i+1}$ ($1\le i<n$) and $x_n\prec x_1$. That is, all possible ordering induced by the cycle $(x_1,\dotsc,x_n)$, broken at one place. Then one can show that the gcd of the $n$ Macaulay matrices obtained in this way is the resultant of the polynomial system~\cite{Macaulay}.

We now aim to prove that the matrix $\Mac$ is easily representable by a circuit. The following definition is the straightforward adaptation of the notion of \emph{Small Circuit Representation} for graphs~\cite{GW83}, also known as \emph{succinct representation}.
Several authors have studied the transfers of complexity results between a (classical) graph problem and its succinct version, that is the same problem but where the input is a circuit describing the graph~\cite{GW83,PY86,LB89,Bal96}. Note that some variants of this notion have also been studied~\cite{Wag86,BLT92,FKVV98}, as well as a counting version~\cite{Tor88} and version for BSS machines~\cite{BCN06}.

\begin{definition}
A \emph{circuit representation} of an integer matrix $M$ of dimensions $(n\times m)$ is a multiple-output circuit $C_M$ with two inputs of $\lceil\log n\rceil$ and $\lceil\log m\rceil$ bits respectively, that on input $(i,j)$ (written in binary) evaluates to (the binary representation of) $M_{ij}$.

A circuit representation of a graph $G$ is a circuit representation of its adjacency matrix.
\end{definition}

\begin{proposition}
The matrix $\Mac$ has a circuit representation of polynomial size (in $n$ and $d$).
\end{proposition}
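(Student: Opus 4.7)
The plan is to build an explicit boolean circuit $C_{\Mac}$ that, on input two row/column indices $i, j$ in binary (of $\lceil\log|\Mon_d|\rceil=O((n+d)\log(n+d))$ bits each), outputs the entry $\Mac_{i,j}$. The circuit proceeds in four stages: (1) unrank $i$ and $j$ into the multi-indices $\alpha,\beta\in\mathbb{N}^n$ with $|\alpha|=|\beta|=d$ representing the $i$-th and $j$-th monomials in reverse-lexicographic order; (2) compute the pivot $i_0:=\min\{k:\alpha_k\ge d_k\}$ that determines which polynomial $f_{i_0}$ labels the row; (3) test whether $\beta$ lies in the support of the shifted polynomial $(x^\alpha/x_{i_0}^{d_{i_0}})f_{i_0}$; (4) when it does, return the corresponding coefficient of $f_{i_0}$. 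Each stage will be realized by a subcircuit of size polynomial in $n$, $d$, and the bit-size of the input system, so the overall circuit will have polynomial size.

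The only non-routine stage is (1). I would use the standard unranking scheme for compositions: determine $\alpha_n$ as the largest $a\in\{0,\dotsc,d\}$ such that the accumulated count $\sum_{b<a}\binom{n+d-b-2}{n-2}$ of compositions of $d$ into $n$ nonnegative parts whose last part is strictly less than $a$ does not exceed $i$; then subtract this count from $i$ and recurse on $(n-1, d-\alpha_n)$ to fix $\alpha_{n-1},\dotsc,\alpha_1$. Only $O(nd)$ binomial coefficients $\binom{a}{b}$ with $a, b\le n+d$ need to be computed and compared, each admitting a polynomial-size circuit (for instance via a Pascal triangle restricted to the relevant range and hard-coded into the circuit). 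Applying the same subcircuit to $j$ yields $\beta$.

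For stage (2), $n$ comparisons of the $\alpha_k$ against the hard-coded constants $d_k$ followed by a priority encoder produce $i_0$, which is well-defined since $d=1+\sum_k(d_k-1)$ forces $\alpha_k\ge d_k$ for at least one $k$. For stage (3), the entry $\Mac_{i,j}$ equals the coefficient of $x^\delta$ in $f_{i_0}$ with $\delta:=\beta-\alpha+d_{i_0}e_{i_0}$; it vanishes unless every component of $\delta$ is nonnegative, in which case $|\delta|=d_{i_0}$ is automatic from $|\alpha|=|\beta|=d$. Stage (4) is then a multiplexer addressed by $(i_0,\delta)$ that returns the coefficient $\gamma_{i_0,\delta}$ from the input system. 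The main (and essentially only) obstacle is the bookkeeping in stage (1); once the unranking subcircuit is in place, combining the four stages produces the claimed polynomial-size circuit representation.
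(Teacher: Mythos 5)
Your proposal is correct and follows essentially the same route as the paper: unrank $i$ and $j$ into exponent vectors $\alpha,\beta$ by counting compositions in reverse-lexicographic order via binomial coefficients, identify the pivot $i_0=\min\{k:\alpha_k\ge d_k\}$, and read off the coefficient of $x^{\beta-\alpha+d_{i_0}e_{i_0}}$ in $f_{i_0}$. The only cosmetic difference is that the paper describes a polynomial-time algorithm and then invokes the standard algorithm-to-circuit translation, whereas you describe the circuit stages directly.
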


\begin{proof}
We actually give a polynomial time algorithm that on input $n$, $d$, $i$ and $j$ outputs the entry $\Mac_{ij}$. 

The first step of the algorithm is to find the monomials $x^\alpha$ and $x^\beta$ corresponding to the $i$-th row and the $j$-th column respectively. 
Let us write $A_d=\{(\alpha_1,\dotsc,\alpha_n):\alpha_1+\dotsb+\alpha_n=d\}$, so that $\Mon_d=\{x^\alpha:\alpha\in A_d\}$. We need to find the $i$-th $j$-th elements of $A_d$ in reverse lexicographic order.

With the reverse lexicographic order on $A_d$, the first elements are tuples of the form $(\alpha_1,\dotsc,\alpha_{n-1},0)$, then $(\alpha_1,\dotsc,\alpha_{n-1},1)$, and this up to $(0,\dotsc,0,d)$ which is the largest element. Given an index $i$, we first want to find the value of $\alpha_n$ of the $i$-th element. To this end consider, for $1\le k\le n$, $A_d^k=\{\alpha:\alpha_1+\dotsb+\alpha_k=d\text{ and }\alpha_{k+1}=\dotsb=\alpha_n=0\}$. In particular, $A_d=A_d^n$, and $A_d^k\subset A_d^{k+1}$ for all $k$. Moreover, for all $\alpha\in A_d^k$ and $\alpha'\in A_d\setminus A_d^k$, 
then $\alpha<\alpha'$.
The elements of $A_d$ of the form $(\alpha_1,\dotsc,\alpha_{n-1},v)$ for some value $v$ of $\alpha_n$ are in bijection with $A_{d-v}^{n-1}$. Thus, if we know that $\sum_{l=0}^{d-v-1}|A_l^{n-1}|<i\le\sum_{l=0}^{d-v} |A_l^{n-1}|$, we know that $\alpha_n=v$. To continue, we remark that if the $i$-th element of $A_d^n$ satisfies $\alpha_n=v$, then it is also the element of index $(i-\sum_{l=0}^{d-v-1}|A_l^{n-1}|)$ in $A_{d-v}^{n-1}$. This allows us to find recursively all the values of the $\alpha_k$'s. This gives us an algorithm, and we shall prove it runs in polynomial time. Since
\[|A^k_d|=\binom{d+k-1}{d}=|A^k_{d-1}|\cdot\frac{d+k-1}{d},\]
we can compute the sums $\sum_{l=0}^{d-v-1}|A_l^{n-1}|$ for all values of $v$ 
in a linear (in $n$ and $d$) number of operations. Thus, we only need a quadratic number of operations to find the $i$-th and the $j$-th elements of $A_d$.

Once $x^\alpha$ and $x^\beta$ are found, we can compute $\Mac_{ij}$ using Equation~\ref{eq:MacRow}. 
We aim to find the coefficient of $x^\beta$ in the polynomial $(x^\alpha/x_i^d) f_i$. 
Note that $\min\{j:x_j^{d_j}\text{ divides }x^\alpha\}=\min\{j:d_j\le\alpha_j\}$. Therefore, it is easy to compute $x^\alpha/x_i^{d_i}$. Now, the coefficient $x^\beta$ in $(x^\alpha/x_i^{d_i})f_i$ equals the coefficient of $x^\beta/(x^\alpha/x_i^{d_i})$ in $f_i$. This proves that we can find the coefficient we need in polynomial time.

Since we have a polynomial-time algorithm that given $n$, $d$, $i$ and $j$ computes $\Mac_{ij}$, there exists a polynomial-size circuit representing $\Mac$. 
\end{proof}

\subsection{Determinant of a matrix given by a circuit}

In this subsection, we show that testing for zero the determinant of a matrix given by a circuit is $\PSPACE$-complete. 
This result holds in any characteristic. The proof is based on the $\PSPACE$-completeness of testing the existence of a unique $s$-$t$-path in a graph given by a circuit. 

The $\PSPACE$-completeness of the $s$-$t$-connectivity (without the unicity condition) in a graph given by a circuit was proved \emph{via} a reduction from the problem $\QBF$~\cite{LB89}. The idea was to turn a quantified formula into a graph. Since the proof of $\PSPACE$-completeness of $\QBF$ actually uses a formula to express the $s$-$t$-connectivity in a graph, one may wonder whether their proof can be simplified by removing the use of a quantified formula. A by-product of our proof is a positive answer to this question.

\begin{lemma}
Let $C$ be a circuit representing a directed graph $G$ with two distinguished vertices $s$ and $t$, with the promise that there is at most one path from $s$ to $t$ in $G$. It is $\PSPACE$-complete to decide if such a path exists. 
\end{lemma}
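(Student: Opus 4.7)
The plan is to establish the upper and lower bounds separately, with the key trick being that the uniqueness promise comes for free from machine determinism on the hardness side. For the $\PSPACE$ upper bound I would ignore the promise entirely and show that reachability in any circuit-represented graph lies in $\PSPACE$: vertex names have polynomial length $n$, each edge query $(u,v)$ is answered in polynomial time by evaluating the circuit $C$, and Savitch's recursive procedure $\mathrm{REACH}(u,v,2^k)$, which guesses a midpoint $w$ and recurses on $(u,w,2^{k-1})$ and $(w,v,2^{k-1})$, uses recursion depth $O(n)$ with $O(n)$ bits per stack frame, giving total space $O(n^2)$.

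For $\PSPACE$-hardness I would reduce directly from an arbitrary language $L\in\PSPACE$ (rather than going through $\QBF$). Fix a deterministic Turing machine $M$ deciding $L$ in space $p(|x|)$, modified so that $M$ has a unique accepting configuration $c_{\mathrm{acc}}$, obtained by clearing the tape and returning the head to the origin before accepting. Given input $x$, configurations of $M$ on $x$ are encoded by $O(p(|x|))$ bits, and one step of $M$ can be computed from a configuration in polynomial time. I would then build a polynomial-size circuit $C_{M,x}$ taking a pair of configurations $(c,c')$ and outputting $1$ iff $c'$ is the successor of $c$ under $M$'s transition function (and $0$ if $c$ is halting). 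Setting $s$ to be the initial configuration on $x$ and $t = c_{\mathrm{acc}}$, the graph $G_{M,x}$ described by $C_{M,x}$ satisfies $x\in L$ iff there is a path from $s$ to $t$.

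The crucial observation that makes the promise automatic is that $M$ is deterministic, so every vertex of $G_{M,x}$ has out-degree at most one; consequently the trajectory out of any vertex is unique, and in particular there is at most one $s$-$t$-path. The main obstacle I anticipate is not conceptual but bookkeeping: verifying that the ``clean-up before accept'' normal form for $M$ does not alter its language, and that the transition circuit correctly handles boundary cases (halting configurations, the head at a tape endpoint, invalid encodings, etc.) so that the edges described by $C_{M,x}$ agree exactly with those of the deterministic configuration graph. Since this reduction avoids quantified formulas altogether, it simultaneously delivers the promised simplification of the argument of~\cite{LB89}, which was the motivating remark just before the lemma.
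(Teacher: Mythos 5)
Your proposal is correct and takes essentially the same route as the paper: a polynomial-space reachability procedure on the implicitly described graph for the upper bound (the paper invokes nondeterministic log-space reachability where you invoke Savitch directly, a cosmetic difference), and for hardness a reduction from an arbitrary $\PSPACE$ language via the configuration graph of a deterministic polynomial-space machine, with determinism yielding the uniqueness promise for free. Your explicit ``clean-up before accepting'' normalization is a small extra bookkeeping step that the paper leaves implicit, and your closing remark about bypassing $\QBF$ matches the paper's stated by-product.
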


\begin{proof}
The problem can be decided in polynomial space since the existence of a $s$-$t$-path in a (classical) digraph can be decided in nondeterministic logarithmic space. The classical algorithm is used as if the graph were given as input, and each 
time the algorithm needs to know if there is an arc from a vertex $u$ to another vertex $v$, it evaluates the circuit on these inputs. This nondeterministic algorithm uses logarithmic space in the number of vertices, that is, polynomial space in the size of the input circuit.

Now let $L\in\PSPACE$, decided by a deterministic Turing Machine $M$ in polynomial space. Consider, on input $x$, the graph of configurations $G^M_x$ of the machine $M$. This graph is described by a circuit that given as inputs two configurations $c$ and $c'$ of $M$, tells if $c'$ can be reached from $c$ in one step of computation. It is known that such a circuit of polynomial size exists, and that on input $x$, the circuit can be built is polynomial time. Now, there is a path from the start configuration to the accepting one in $G^M_x$ if and only if $x\in L$. 

To conclude, it remains to remark that there is at most one path from the start configuration to the accepting one in a deterministic machine. 
\end{proof}

\begin{corollary}
Let $C$ be a circuit representing a (directed) forest, that is a directed graph $F$ such that each vertex has out-degree at most $1$, and two distinguished vertices $s$ and $t$. It is $\PSPACE$-complete to decide if there is a path from $s$ to $t$.
\end{corollary}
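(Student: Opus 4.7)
The plan is to observe that the reduction already constructed in the proof of the previous lemma in fact produces a forest, so the corollary is obtained essentially for free by re-examining that proof.

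First, for the $\PSPACE$ upper bound, I would note that any directed forest (out-degree at most $1$) automatically satisfies the promise of the previous lemma, since between any two vertices there can be at most one directed path: if there were two distinct $s$-$t$-paths, the first vertex at which they diverge would have out-degree at least $2$. Hence the nondeterministic logspace-in-the-graph-size reachability algorithm (which evaluates $C$ on pairs of vertices to simulate edge queries, giving $\PSPACE$ in the size of $C$) applies unchanged.

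For $\PSPACE$-hardness, I would reuse the very same reduction: given $L \in \PSPACE$ decided by a deterministic polynomial-space Turing machine $M$, on input $x$ build in polynomial time the polynomial-size circuit $C$ representing the configuration graph $G^M_x$, with $s$ the initial configuration and $t$ the accepting configuration. The key observation, already implicit in the previous proof, is that since $M$ is deterministic, every configuration has at most one successor, so each vertex of $G^M_x$ has out-degree at most $1$. In other words, $G^M_x$ is automatically a directed forest in the sense of the corollary, and there is an $s$-to-$t$ path in it iff $x \in L$.

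The only minor technicality is ensuring that $t$ is well-defined, i.e.\ that there is a canonical unique accepting configuration; this is handled by the standard normalization in which $M$ blanks its worktape and returns its head to position $0$ before entering the accepting state. I do not expect any real obstacle: all the work was already done in the proof of the previous lemma, and this corollary simply records the stronger structural property (out-degree $\le 1$) that the reduction already satisfies.
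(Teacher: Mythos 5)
Your proposal is correct and matches the paper's proof, which simply observes that the configuration graph of a deterministic machine is a forest, so the reduction from the previous lemma already establishes this stronger statement. Your added details (out-degree $\le 1$ implies unique $s$-$t$-paths, hence the promise holds; normalization of the accepting configuration) are sound elaborations of the same one-line argument.
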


\begin{proof}
The configuration graph of a deterministic machine is a forest. This shows that the previous proof actually implies this stronger statement.
\end{proof}

Let $G$ be a digraph and $M$ its adjacency matrix. A \emph{cycle cover} of $G$ is a subset of the arcs which form a set of cycles such that each vertex belongs to exactly one cycle. Therefore, if the vertices are numbered from $1$ to $n$, a cycle cover corresponds to a permutation of $\{1,\dotsc,n\}$. The signature of the cycle cover is then defined as the signature of the corresponding permutation. It follows that the determinant of $M$ equals the sum of the signatures of all the cycle covers of $G$.

\begin{theorem}
Let $C$ be a circuit describing a $(0,1)$-matrix $M$ whose determinant is promised to be either $0$, $1$ or $-1$. Then it is $\PSPACE$-complete to decide if this determinant vanishes.
\end{theorem}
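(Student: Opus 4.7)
The plan is to prove both directions of the $\PSPACE$-completeness. For the upper bound, I would invoke the fact that the determinant of an $N\times N$ integer matrix lies in $\NC^2$, and that over a fixed finite field such as $\FF_{\!3}$ Berkowitz-style algorithms put it in $\class{DSPACE}(\log^2 N)$. For a matrix $M$ of size $N=2^n$ given by a polynomial-size circuit $C$, each matrix lookup is replaced by a polynomial-time evaluation of $C$; the whole simulation therefore runs in $O(n^2)$ space, which is polynomial in the input size. Under the promise that $\det(M)\in\{-1,0,1\}$, computing $\det(M)\bmod 3$ suffices to decide whether $\det(M)=0$, so we never have to handle large intermediate integers.

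For hardness, I would reduce from the circuit-represented forest $s$-$t$-connectivity problem furnished by the preceding corollary. Given such a forest $F$, which is acyclic with every vertex of out-degree at most one (write $p(v)$ for the unique forest-successor of $v$, when it exists), and distinguished vertices $s\neq t$, I build a digraph $G$ on the same vertex set with the following arcs: every forest arc $(v,p(v))$; a self-loop $(v,v)$ at each $v\notin\{s,t\}$; and the additional arc $(t,s)$. A polynomial-size circuit for the adjacency matrix $M$ of $G$ is obtained in constant overhead from $C$ by handling the three cases, and $M$ is a $(0,1)$-matrix, hence a valid input to the target problem.

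The core of the argument is the analysis of $\det(M)=\sum_\sigma \mathrm{sgn}(\sigma)$, ranging over permutations $\sigma$ such that $(v,\sigma(v))$ is an arc of $G$ for every $v$. Since $s$ has no self-loop and its only out-neighbour in $G$ is $p(s)$, any such $\sigma$ forces $\sigma(s)=p(s)$. Bijectivity of $\sigma$ then prevents $p(s)$ from taking its self-loop, so $\sigma(p(s))=p^2(s)$ (or $s$ if $p(s)=t$, via the new arc). Iterating, $\sigma$ must follow the forest orbit of $s$ until it can close into a cycle, which can happen only by reaching $t$ and using $(t,s)$. If the forest path from $s$ never reaches $t$, no $\sigma$ exists and $\det(M)=0$. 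Otherwise, the $s$-cycle $s\to p(s)\to\cdots\to t\to s$ is completely determined, and every vertex outside this cycle is forced to take its self-loop (any alternative choice would duplicate a preimage along the $s$-cycle). Hence $\sigma$ is unique and $\det(M)=\mathrm{sgn}(\sigma)\in\{+1,-1\}$ exactly when $F$ has an $s$-$t$-path, and $\det(M)=0$ otherwise; in both cases the promise is satisfied.

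The delicate point I expect to be the main obstacle is the uniqueness of the cycle cover, which is what prevents cancellations in the signed sum and keeps the determinant within $\{-1,0,1\}$. Concretely, one has to rule out covers in which the forest orbit of $s$ is entered ``from the side'' (creating two preimages for some $p^i(s)$) as well as covers that use a forest cycle disjoint from $s$; the former is eliminated by the out-degree-at-most-one hypothesis, and the latter by the acyclicity of $F$ (consistent with the previous corollary's construction via configuration graphs of time-bounded deterministic machines). Once this rigidity is established, verifying that $M$ admits a polynomial-size circuit and that the promise holds are routine, completing the reduction.
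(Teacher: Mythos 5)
Your proposal matches the paper's proof: both establish hardness by reducing from circuit-represented forest $s$-$t$-connectivity, constructing $G$ by adding the arc $(t,s)$ and self-loops on all vertices other than $s$ and $t$, and analyzing cycle covers of $G$, while the upper bound comes from a parallel determinant algorithm (your reduction mod~$3$ is slightly more careful than the paper's bare citation of Csanky, since it controls intermediate bit-sizes under the promise). The main differences are matters of detail rather than approach: you spell out the uniqueness of the cycle cover where the paper compresses it into the one-line claim that ``the only cycles in $G$ are the loops and a cycle through $s$ and $t$,'' and you correctly flag that this step needs $F$ to be acyclic (as the configuration-graph construction guarantees) and not merely of out-degree at most one; one small slip is that ruling out a ``side entry'' into the $s$-cycle is a consequence of the injectivity of the permutation rather than of the out-degree hypothesis, but this does not affect the correctness of the argument.
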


\begin{proof}
A $\PSPACE$ algorithm for this problem follows from the (uniform) $\NC$ algorithm for the determinant~\cite{Csa76}. 

Suppose we are given a circuit $C$ describing a forest $F$, with two distinguished vertices $s$ and $t$. Consider the graph $G$ obtained by adding an arc from $t$ to $s$ and adding loops on all remaining vertices. Then the only cycles in $G$ are the loops and a cycle through $s$ and $t$ which exists if and only if there was a path from $s$ to $t$ in $F$. Therefore, a cycle cover of $G$ exists if and only if there is a path from $s$ to $t$ in $F$.

If we consider the adjacency matrix of $G$, its determinant equals the sum of the signatures of the cycle covers of $G$. Then the determinant is nonzero if an only if there is a $s$-$t$-path in $F$, in which case its value is $\pm 1$.
\end{proof}

\begin{corollary}
The problem of deciding if the determinant of a square $(0,1)$-matrix vanishes \emph{modulo} $n$ is $\PSPACE$-complete for all $n\ge 2$.
\end{corollary}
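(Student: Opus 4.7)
The plan is to observe that the corollary falls out of the previous theorem with essentially no additional work: the hard reduction produces a matrix whose determinant already lies in $\{-1,0,1\}$, and both $\pm 1$ remain nonzero modulo any $n\ge 2$. The two tasks are therefore to transfer the $\PSPACE$ upper bound to the modular setting and to observe that the reduction transfers verbatim.

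For the upper bound I would argue exactly as in the proof of the previous theorem, but invoke a determinant algorithm that works over an arbitrary commutative ring so that composite moduli $n$ cause no trouble (Csanky's algorithm involves divisions that fail in such rings, so I would use Berkowitz's $\NC^2$ algorithm instead). This gives a polylogarithmic-space (in the matrix dimension $N$) procedure for computing $\det M \bmod n$, and since a circuit $C$ of size $s$ describes a matrix of dimension $N = 2^{O(s)}$, the simulation runs in space polynomial in $s$ and $\log n$. In particular, testing $\det M \equiv 0 \pmod n$ is in $\PSPACE$.

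For the lower bound, fix any $n\ge 2$ and take the reduction from the preceding theorem without modification: given a circuit $C$ describing a forest $F$ with distinguished vertices $s,t$, build the $(0,1)$-matrix $M$ corresponding to the graph $G$ obtained by adding an arc $t\to s$ and self-loops on the remaining vertices. That proof showed that $G$ admits at most one nontrivial cycle cover, so $\det M \in \{-1,0,1\}$, with $\det M = 0$ exactly when no $s$-$t$-path exists in $F$. Since $\pm 1 \not\equiv 0 \pmod n$ for every $n \ge 2$, we obtain the equivalence
\begin{equation}
\det M \equiv 0 \pmod{n} \iff \det M = 0 \iff F \text{ contains no } s\text{-}t\text{-path},
\end{equation}
so the modular question is $\PSPACE$-hard by the previous theorem.

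There is no real obstacle here; the only thing to check is that the promise $\det M \in \{-1,0,1\}$ of the previous theorem survives reduction modulo $n$, which is immediate for $n\ge 2$. The choice of determinant algorithm in the upper bound is the only subtlety, and it is handled by using Berkowitz rather than Csanky so that the argument is uniform in the characteristic of $\ZZ/n\ZZ$.
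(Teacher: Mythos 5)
Your argument is correct and essentially identical to the paper's: both reduce to the observation that $\pm 1 \not\equiv 0 \pmod n$ for all $n\ge 2$, so the promise $\det M \in \{-1,0,1\}$ from the preceding theorem makes the modular test equivalent to the integer test. Your additional remark that one should use Berkowitz's division-free algorithm rather than Csanky's over $\ZZ/n\ZZ$ (since Csanky requires divisions that can fail for composite $n$) is a sound refinement of the upper bound which the paper leaves implicit.
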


\begin{proof}
For any $n\ge 2$, $1\not\equiv 0\bmod n$ and $-1\not\equiv 0\bmod n$. Therefore, the theorem directly implies this corollary.
\end{proof}

\section{Final remarks}

In characteristic zero,
the upper and lower bounds on \hhnsq are in a sense close to each other. Indeed, $\NP=\mathsf{\Sigma}_1\P\subseteq\AM\subseteq\mathsf{\Pi}_2\P$, 
that is, $\AM$ lies between the first and the second level of the polynomial hierarchy. 
Furthermore, ``under plausible complexity conjectures, $\AM=\NP$'' \cite[p157]{AroraBarak}. 
In particular, 
$\AM=\class{BP\cdot NP}$
by definition, 
that is $\AM$ is a \emph{randomized} version of $\NP$, and randomization is often believed not to add any power to computation models.
Improving the $\NP$ lower bound may be challenging as the proof of 
Proposition \ref{prop:in-am} shows that this would imply the same lower bound for  \emph{Hilbert's Nullstellensatz}.

In positive characteristic, the situation is quite different. Indeed, the best known upper bound for \emph{Hilbert's Nullstellensatz} as well as for
the resultant is \PSPACE. 
As in characteristic zero, the known upper and lower bounds are therefore 
the same for 
both problems. 
But as the gap between the $\NP$ lower bound and the $\PSPACE$ upper bound
is rather big, these problems might be of widely different complexity
(more precisely, testing the resultant for zero could in principle be much easier than
deciding whether a general polynomial system is satisfiable).
Canny's algorithm for computing the resultant \cite{Can88-robot} involves the
computation of 
the determinants of 
exponential-size matrices, known as Macaulay matrices, in polynomial space. Those matrices admit a succinct representation. 
We proved that computing the determinant
of a general succinctly represented matrix is $\PSPACE$-hard 
(and testing it for zero is $\PSPACE$-complete).
It follows that Canny's polynomial-space 
upper bound could be improved only by exploiting
the specific structure of the Macaulay matrices in an essential way, or
by finding an altogether different (non Macaulay-based) 
approach to this problem. As pointed out in Section~\ref{upper-bound}, in characteristic zero  a different approach is indeed possible for {\em testing whether the resultant vanishes} (rather than for computing it).
This problem is wide open in positive characteristic.

\paragraph{Acknowledgments.}
We thank Bernard Mourrain and Maurice Rojas for sharing their insights
on the complexity of the resultant in characteristic $0$.

\end{document}